\documentclass[conference]{IEEEtran}
\IEEEoverridecommandlockouts
\usepackage{cite}
\usepackage{amsmath,amssymb,amsfonts}
\usepackage{algorithmic}
\usepackage{graphicx}
\usepackage{textcomp}
\usepackage{xcolor}
\usepackage{bm}
\usepackage{booktabs,amsthm}
\usepackage{subcaption} 
\usepackage[numbers,sort&compress]{natbib}
\def\BibTeX{{\rm B\kern-.05em{\sc i\kern-.025em b}\kern-.08em
    T\kern-.1667em\lower.7ex\hbox{E}\kern-.125emX}}

\usepackage{comment}

\newtheorem{lemma}{Lemma}
\newtheorem{corollary}{Corollary}
\newtheorem{theorem}{Theorem}
\newtheorem{definition}{Definition}
\newtheorem{example}{Example}
\newtheorem{remark}{Remark}

\begin{document}

\title{New Results Towards the Characterization of Service Rate Region of Reed-Muller Codes
\thanks{Research supported by National Key Research and Development Program of China under Grant Nos. 2022YFA1004900 and 2021YFA1001000, National Natural Science Foundation of China under Grant No. 12231014, and the Taishan Scholars Program. Corresponding author: Y. Zhang.}
}

\author{\IEEEauthorblockN{{Chenyang Yu},
{Luopeng Sun} and
{Yiwei Zhang}}
\IEEEauthorblockA{State Key Laboratory of Cryptography and Digital Economy Security,\\
School of Cyber Science and Technology, Shandong University, Qingdao, Shandong, 266237, China\\
cyyu@mail.sdu.edu.cn; lps\_sdu@mail.sdu.edu.cn; ywzhang@sdu.edu.cn}
}

\maketitle

\begin{abstract}
The Service Rate Region (SRR) serves as a critical metric for evaluating the concurrent service capacity of distributed storage systems. While several works have characterized the SRR for MDS codes and first order Reed-Muller codes, for high-order Reed-Muller codes the 
problem becomes way more complicated and only partial results were given by Ly, Soljanin, and Lalitha [IEEE ISIT 2025].
In this paper, we refine the SRR analysis by explicitly characterizing the intersection patterns of recovery sets for high-order Reed-Muller codes, deriving the exact region for the case \(m=r+1\) and providing new types of strictly tighter constraints to bridge the gap between existing approximations and the exact SRR polytope.
\end{abstract}

\section{Introduction}

The scalability of modern distributed storage systems (DSS) fundamentally depends on serving concurrent access requests. While replication often falters under skewed demand~\cite{ref1_aktacs2021service}, coded redundancy offers superior throughput optimization. To rigorously evaluate these schemes, the Service Rate Region (SRR) serves as a critical metric. Defined as the set of all serviceable request rate vectors~\cite{ref1_aktacs2021service}, the SRR forms a convex polytope in $\mathbb{R}^{k}$ that precisely bounds the service capacity of a system storing $k$ independent message symbols~\cite{ref1_aktacs2021service}. Prior work has extensively characterized the SRR for specific classes of linear codes, including MDS codes~\cite{ref1_aktacs2021service,ref2_ly2026service}, Hamming codes~\cite{ref3_choudhary2025service}, and first-order Reed-Muller codes~\cite{ref4_kazemi2020geometric}. 
In addition, upper bounds on the maximum achievable service rate for a single data object under general linear coding schemes are derived in~\cite{ref5_ly2025maximal}.
However, extending these results to more complex structures presents non-trivial computational challenges, as approaches based on linear programming relaxations~\cite{ref6_alon2012large} become computationally demanding when recovery graph structures are not explicitly enumerated. 

Recently, the classic Reed-Muller (RM) codes~\cite{ref7_muller1954application,ref8_reed1954transactions} have attracted renewed attention in various domains~\cite{ref9_Kudekar2017reedmuller,ref10_reeves2023reed,ref11_abbe2023proof,ref12_calderbank2010construction,ref13_beimel2005general}, also including the analysis of its service rates. Most notably, Ly, Soljanin, and Lalitha~\cite{ref14_isit_ly2025service} provided a comprehensive analysis of the recovery-set structure of RM codes by leveraging their connections to finite Euclidean geometry. They derived explicit bounds on the maximum achievable demand for individual message symbols and established that the SRR can be approximated by two nested simplices -- a maximal achievable inner simplex and a minimal enclosing outer simplex -- which differ by at most a factor of two. Furthermore, they demonstrated that a complete and exact characterization of the SRR is computationally equivalent to the open coordinate-constrained weight enumerator problem for the dual code. While this work establishes a solid baseline using aggregate bounds and approximations for high-order RM codes, deriving the exact region invites a more granular examination of the interactions between message symbols of different orders and the precise boundaries of the polytope.

Motivated by this observation, we further investigate the SRR problem for high-order RM codes, giving a more precise characterization of the SRR for RM codes with arbitrary parameters, bridging the gap between the existing approximate simplices and the exact complete region. Building on the recovery-set properties of RM codes, we first obtain a complete characterization of the SRR for $\mathrm{RM}(r,r+1)$. 
Furthermore, for $\mathrm{RM}(r,m)$ codes with arbitrary parameters, 
we first derive upper bounds on the maximum achievable demand for any two message symbols in Theorems~\ref{thm:theorem2} and~\ref{thm:theorem3}, and then prove in Theorem~\ref{thm:theorem4} that these bounds are tight,
thereby characterizing all two-dimensional projections of the SRR and yielding a maximum achievable inscribed polytope.



\section{Preliminaries}

Let $[a,b]=\{a,a+1,\dots,b\}$ for positive integers $a$ and $b$, and $[1,n]$ is abbreviated as $[n]$. Let $q$ be a prime power and $\mathbb{F}_q$ be the finite field of order $q$. A $q$-ary linear $[n,k,d]_q$-code $\mathcal{C}$ is a $k$-dimensional subspace of the $n$-dimensional vector space $\mathbb{F}_q^{n}$ with minimum Hamming distance $d$. The notations $\boldsymbol{0}_k$ and $\boldsymbol{1}_k$ denote the all-zero and all-one column vectors of length $k$, respectively. The unit column vector with entry 1 in the $i$-th coordinate and 0's elsewhere is denoted by $\boldsymbol{e}_i$.

\subsection{Service Rate Region of Codes}

Consider a storage system consisting of $n$ servers, in which $k$ message symbols $s_1, \ldots, s_k$ are stored via a linear $[n, k,d]_q$-code with generator matrix $\boldsymbol{G} \in \mathbb{F}_q^{k \times n}$. Let $\boldsymbol{g}_j$ denote the $j$-th column of $\boldsymbol{G}$, for $1 \le j \le n$. A set $R \subseteq [n]$ is called a \emph{recovery set} for a message symbol $s_i$, if there exist coefficients ${\alpha_j}$'s from $\mathbb{F}_q$ such that
\(
\sum_{j \in R} \alpha_j \boldsymbol{g}_j = \boldsymbol{e}_i,
\)
i.e., the unit vector $\boldsymbol{e}_i$ is a linear combination of the column vectors of $\boldsymbol{G}$ indexed by $R$. A recovery set $R$ of $s_i$ is called \emph{minimal} if no proper subset of $R$ is still a recovery set of $s_i$. For a minimal recovery set $R$ the column vectors $\{\boldsymbol{g}_j:j\in R\}$ must be linearly independent and all the coefficients in the linear combination $\sum_{j \in R} \alpha_j \boldsymbol{g}_j= \boldsymbol{e}_i$ are non-zero. In the sequel whenever we say recovery sets we mean minimal recovery sets. Let $\mathcal{R}_i = \{ R_{i,1}, \ldots, R_{i,t_i} \}$ be the collection of all the $t_i$ recovery sets for $s_i$. 

For a storage system, multiple users would retrieve some certain message symbols and the servers should process these requests. For $j\in[n]$, let $\mu_j \in \mathbb{R}_{\ge 0}$ be the average rate at which the $j$-th server processes received requests. We refer to $\mu_j$ as the \emph{server capacity}. In most existing results and also the current paper, we assume that all servers have identical capacity which is normalized as $\mu_j=1$, for $j\in[n]$. 

For  $i \in [k]$, let $\lambda_i\in \mathbb{R}_{\ge 0}$ be the users' request rate to download the message symbol $s_i$, and let
\(
\boldsymbol{\lambda} = (\lambda_1, \ldots, \lambda_k) \in \mathbb{R}_{\ge 0}^k
\)
be the corresponding vector of request rates.
A scheduling strategy assigns a fraction of the requests for a message symbol to each of its recovery sets. 
Let $\lambda_{i,j}$ denote the portion of requests for $s_i$
assigned to the recovery set $R_{i,j}$, where $j \in [t_i]$.

The \emph{service rate region} (SRR), denoted by $\mathcal{S}(\boldsymbol{G}) \subseteq \mathbb{R}_{\ge 0}^k$,
is defined as the set of all request vectors $\boldsymbol{\lambda}$ that can be served by
a coded storage system with generator matrix $\boldsymbol{G}$ where each server has uniform capacity $\mu_j=1$. Therefore, $\mathcal{S}(\boldsymbol{G})$ is the set of all
non-negative vectors $\boldsymbol{\lambda}$ for which there exists an allocation
$\lambda_{i,j} \in \mathbb{R}_{\ge 0}$, $i \in [k]$ and $j \in [t_i]$,
satisfying the following constraints:
\[
\sum_{j=1}^{t_i} \lambda_{i,j} = \lambda_i, \quad \forall i \in [k], \tag{1a}\label{equa:1a}
\]
\vspace{-2mm}
\[
\sum_{i=1}^{k} \sum_{\substack{j=1,~\ell \in R_{i,j}}}^{t_i} \lambda_{i,j} \le 1, 
\quad \forall \ell \in [n], \tag{1b}\label{equa:1b}
\]
Here the first constraint means that the request rate $\lambda_i$ for the message symbol $s_i$ must be satisfied from all its recovery sets. The second constraint ensures that, for each server, the total amount of requests it receives does not exceed its capacity.

\subsection{Reed-Muller Codes and Their Properties}

Let $v_1, \ldots, v_m$ be $m$ binary variables forming an $m$-tuple
$\boldsymbol{v} = (v_1, \ldots, v_m)$.
For a Boolean function
\(
f(\boldsymbol{v}) = f(v_1, \ldots, v_m),
\)
let $\boldsymbol{f}$ be the vector consisting of all evaluations of $f$
over its $2^m$ possible arguments $\boldsymbol{v}$.

\begin{definition}\label{def:RMcodes}(\cite[Ch. 13]{ref15_macwilliams1977theory})
The $r$-th order binary Reed--Muller code $\mathrm{RM}(r,m)$ of length
$n = 2^m$, for $0 \le r \le m$, consists of all vectors $\boldsymbol{f}$ such that
$f(\boldsymbol{v})$ is a Boolean function that can be expressed as a polynomial
of degree at most $r$.
\end{definition}

It is well-known that $\mathrm{RM}(r,m)$ is a linear code of length $n = 2^m$, dimension
\(
k = \sum_{i=0}^{r} \binom{m}{i} \triangleq \binom{m}{\le r},
\)
and minimum distance
\(
d = 2^{m-r}
\)~\cite{ref15_macwilliams1977theory}.
Therefore, $\mathrm{RM}(r,m)$ is characterized by the parameters
\(
[n,k,d] = \bigl[2^m, \binom{m}{\le r}, 2^{m-r}\bigr]
\).
When $m \ge r+1$, the dual code of $\mathrm{RM}(r,m)$ is
$\mathrm{RM}(m-r-1,m)$~\cite{ref15_macwilliams1977theory}.
In the sequel, we assume $m \ge r+1$ to ensure the existence of $\mathrm{RM}(m-r-1,m)$. In the generator matrix $\boldsymbol{G_{\mathrm{RM}(r,m)}}$ of size $k\times n$, the rows correspond to Boolean functions. They are ordered by increasing degree, and within each degree, the monomials are listed in reverse lexicographic order of their variable indices. The columns simply follow a lexicographic order of all binary $m$-tuples.

\begin{example}
The generator matrix of $\mathrm{RM}(2,4)$ is given by
\[
\renewcommand{\arraystretch}{0.6}
\setlength{\arraycolsep}{1.6pt}
\scalebox{0.9}{
$
\boldsymbol{G}_{\mathrm{RM}(2,4)} =
\left[
\begin{array}{c|cccccccccccccccc}
\bm{1}
& 1 & 1 & 1 & 1 & 1 & 1 & 1 & 1 & 1 & 1 & 1 & 1 & 1 & 1 & 1 & 1 \\
\bm{v_4}
& 0 & 0 & 0 & 0 & 0 & 0 & 0 & 0 & 1 & 1 & 1 & 1 & 1 & 1 & 1 & 1 \\
\bm{v_3}
& 0 & 0 & 0 & 0 & 1 & 1 & 1 & 1 & 0 & 0 & 0 & 0 & 1 & 1 & 1 & 1 \\
\bm{v_2}
& 0 & 0 & 1 & 1 & 0 & 0 & 1 & 1 & 0 & 0 & 1 & 1 & 0 & 0 & 1 & 1 \\
\bm{v_1}
& 0 & 1 & 0 & 1 & 0 & 1 & 0 & 1 & 0 & 1 & 0 & 1 & 0 & 1 & 0 & 1 \\
\bm{v_3v_4}
& 0 & 0 & 0 & 0 & 0 & 0 & 0 & 0 & 0 & 0 & 0 & 0 & 1 & 1 & 1 & 1 \\
\bm{v_2v_4}
& 0 & 0 & 0 & 0 & 0 & 0 & 0 & 0 & 0 & 0 & 1 & 1 & 0 & 0 & 1 & 1 \\
\bm{v_1v_4}
& 0 & 0 & 0 & 0 & 0 & 0 & 0 & 0 & 0 & 1 & 0 & 1 & 0 & 1 & 0 & 1 \\
\bm{v_2v_3}
& 0 & 0 & 0 & 0 & 0 & 0 & 1 & 1 & 0 & 0 & 0 & 0 & 0 & 0 & 1 & 1 \\
\bm{v_1v_3}
& 0 & 0 & 0 & 0 & 0 & 1 & 0 & 1 & 0 & 0 & 0 & 0 & 0 & 1 & 0 & 1 \\
\bm{v_1v_2}
& 0 & 0 & 0 & 1 & 0 & 0 & 0 & 1 & 0 & 0 & 0 & 1 & 0 & 0 & 0 & 1
\end{array}
\right]
$}, 
\]

Clearly the row indexed by the Boolean function $\boldsymbol{v}_i \boldsymbol{v}_j$ is the element-wise product of the rows indexed by $\boldsymbol{v}_i$ and $\boldsymbol{v}_j$.
\end{example}

The message symbols $\{s_1,\dots,s_k\}$, $k=\sum_{i=0}^r{m\choose i}$, are encoded into $(s_1,\dots,s_k)\boldsymbol{G}_{\mathrm{RM}(r,m)}=(c_1,\dots,c_n)$, $n=2^m$, and then stored across $n$ servers. 
Following the notations in~\cite{ref14_isit_ly2025service}, we say that a message symbol is of order $\ell$, $0\leq \ell \leq r$, 
if the index of the row corresponding to the message symbol in the generator matrix is a Boolean function of degree $\ell$.
Let $\mathcal{J}_{\ell}$ be the set of all ${m\choose \ell}$ message symbols of order $\ell$. According to the way we present the generator matrix of $\mathrm{RM}(r,m)$, it holds that $\mathcal{J}_\ell=\left[\sum_{i=0}^{\ell-1}{m\choose i}+1,\sum_{i=0}^{\ell}{m\choose i}\right]$.

Now we list some useful facts about the recovery sets for symbols of a given order, as presented in~\cite{ref14_isit_ly2025service}.

\begin{lemma}\label{lm1:rec sets}(\cite[Theorem 3--5]{ref14_isit_ly2025service})
For each message symbol of order $\ell$, $0\leq \ell \leq r$, it holds that
\begin{itemize}
  \item There is a unique recovery set $S$ of size $2^\ell$ which contains Server 1 and does not contain Server $2^m$.
  \item If $\ell < r$, then any other recovery set has size at least $2^{r+1}-2^{\ell} >2^r$. If $\ell =r$, then all the other recovery sets also have size $2^r$.
  \item The number of recovery sets of size $2^{r+1}-2^{\ell}$ is given by the Gaussian binomial coefficient
\(
\left[ {m-\ell \atop r+1-\ell} \right]_{2},
\)
where each server $j\in[n]\setminus S$ appears exactly 
\(
\left[ {m-\ell-1 \atop r-\ell} \right]_{2}
\)
times in these recovery sets.
\end{itemize}
\end{lemma}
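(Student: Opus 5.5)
The natural route is the geometric description of recovery sets via the dual code. A set $R$ is a recovery set for the symbol indexed by a monomial $v_A$ with $|A|=\ell$ exactly when some codeword of the dual code $\mathrm{RM}(m-r-1,m)$ behaves as follows: it is supported on $R$ together with one extra ``virtual'' coordinate, and that coordinate corresponds to $\boldsymbol{e}_i$. Concretely, I will identify servers with points of $\mathbb{F}_2^m$ and express $\boldsymbol{e}_i$ via the inversion formula for Reed--Muller coefficients. The coefficient of $v_A$ in $f$ equals $\sum_{\boldsymbol{x}\in F}f(\boldsymbol{x})$, where $F$ is any $\ell$-flat in direction $\mathrm{span}\{\boldsymbol{e}_a:a\in A\}$, corrected by the contributions of higher-degree monomials containing $v_A$. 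Since the code has degree at most $r$, I will organize the proof around flats in $\mathrm{EG}(m,2)$ and treat the three bullets in turn.

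\textbf{First bullet.} I will take $S$ to be the $\ell$-dimensional coordinate subspace $\{\boldsymbol{x}:x_j=0\ \forall j\notin A\}$. This set contains the all-zero point, which is Server 1. It does not contain the all-ones point (Server $2^m$) unless $\ell=m$, which is excluded since $\ell\le r<m$. Möbius inversion shows $\sum_{\boldsymbol{x}\in S}\boldsymbol{g}_{\boldsymbol{x}}=\boldsymbol{e}_{v_A}$: for a monomial $v_B$, the sum over $S$ of $v_B(\boldsymbol{x})$ counts the points of $S$ with $x_b=1$ for all $b\in B$. That count is $2^{\ell-|B|}$ if $B\subseteq A$ and $0$ otherwise, so it is odd only for $B=A$. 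Minimality follows because any recovery set for $v_A$ must have size at least $2^\ell$: the punctured code restricted to a smaller set cannot isolate $v_A$, by the minimum-distance bound $d(\mathrm{RM}(m-\ell,m))=2^\ell$ applied to the relevant dual relation. For uniqueness, I will argue that a size-$2^\ell$ recovery set $R$ gives the relation $\boldsymbol{1}_R$ (restricted appropriately) being orthogonal to every monomial except $v_A$. Hence $\boldsymbol{1}_R$ is a minimum-weight codeword of $\mathrm{RM}(m-\ell,m)$ shifted by a component. By the classical theorem that minimum-weight codewords of RM codes are incidence vectors of flats, $R$ is an $\ell$-flat whose direction space must be $\mathrm{span}\{\boldsymbol{e}_a:a\in A\}$ in order to pick out $v_A$ with odd coefficient. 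The translates of this direction space are exactly the parallel class of $S$, and among them only $S$ contains $\boldsymbol{0}$. The condition ``does not contain Server $2^m$'' then distinguishes $S$ from the translate containing the all-ones point when relevant.

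\textbf{Second and third bullets.} Any other recovery set $R$ yields a dual codeword $\boldsymbol{1}_R+\boldsymbol{1}_S\in\mathrm{RM}(m-r-1,m)$, because both sets recover the same symbol. Its weight is $|R|+|S|-2|R\cap S|\ge 2^{r+1}$, so $|R|\ge 2^{r+1}-2^\ell$, with equality iff $S\subseteq R$ is impossible and $R\cap S=\emptyset$ with $R\cup S$ an $(r+1)$-flat. More precisely, equality forces $R\sqcup S$ to be an $(r+1)$-flat containing $S$, i.e.\ $R=F\setminus S$ for an $(r+1)$-flat $F\supseteq S$. For $\ell=r$, the same argument gives the other recovery sets as $F\setminus S$ with $|F\setminus S|=2^r$. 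I also need the converse, that every such $F\setminus S$ really recovers $v_A$. This holds because $\boldsymbol{1}_F$ lies in the dual, so $\sum_{F\setminus S}\boldsymbol{g}=\sum_S\boldsymbol{g}=\boldsymbol{e}_{v_A}$, and minimality follows from the size bound. The count then reduces to the number of $(r+1)$-dimensional subspaces of $\mathbb{F}_2^m$ containing a fixed $\ell$-dimensional subspace, which is $\left[{m-\ell\atop r+1-\ell}\right]_2$. Each point $\boldsymbol{x}\notin S$ lies in those $F$ containing $\mathrm{span}(S,\boldsymbol{x})$, an $(\ell+1)$-space, giving $\left[{m-\ell-1\atop r-\ell}\right]_2$.

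\textbf{Main obstacle.} I expect the hardest step to be the rigorous handling of the equality case and of uniqueness, namely the reliance on the Kasami--Lin/Delsarte characterization that minimum-weight RM codewords are flats. I would simply cite that classical result from MacWilliams--Sloane rather than reprove it.
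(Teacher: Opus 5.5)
Your proof of the first bullet, of the inequality $|R|\ge 2^{r+1}-2^{\ell}$, and of the third bullet is sound: the subspace-sum computation, $\boldsymbol{1}_R\in\mathrm{RM}(m-\ell,m)$, the coset bound $|R\triangle S|\ge 2^{r+1}$, the equality analysis, and the Gaussian-binomial counts all work. The paper gives no proof of this lemma and only cites Ly--Soljanin--Lalitha.

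\textbf{The gap is the clause for $\ell=r$.} Your sentence ``for $\ell=r$, the same argument gives the other recovery sets as $F\setminus S$'' does not follow. The coset argument only gives $|R|\ge 2^{r+1}-2^{r}=2^r$. The description $R=F\setminus S$ comes from the equality case, so it covers only the $R$ of size exactly $2^r$. Nothing in your argument bounds $|R|$ from above, and no argument can, because the literal clause is false once $m\ge r+2$. Here is a counterexample.

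\emph{Counterexample in $\mathrm{RM}(1,3)$.} Take the order-$1$ symbol $v_1$, so $\ell=r=1$ and $S=\{v_2=v_3=0\}=\{1,2\}$.
The affine hyperplane $H=\{v_1=v_2\}=\{1,4,5,8\}$ satisfies $\boldsymbol{1}_H\in\mathrm{RM}(1,3)=\mathrm{RM}(1,3)^{\perp}$. Hence $R=S\triangle H=\{2,4,5,8\}$ satisfies $\sum_{j\in R}\boldsymbol{g}_j=\boldsymbol{e}_{v_1}$.
Every element of the coset $\boldsymbol{1}_S+\mathrm{RM}(1,3)$ has even weight. The weight-$2$ elements are exactly the four translates $\{1,2\},\{3,4\},\{5,6\},\{7,8\}$ of $S$, and each meets $R$ in only one point.
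So $R$ is a minimal recovery set of size $4\neq 2^r$. The same construction in $\mathrm{RM}(2,4)$, with $S=\{v_3=v_4=0\}$ for $v_1v_2$ and $H=\{v_1=v_3\}$, gives a minimal recovery set of size $8$.

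\textbf{What you can prove instead.} Your equality analysis proves the corrected clause: for $\ell=r$, every recovery set other than $S$ has size at least $2^r$. Those of size exactly $2^r$ are the $2^{m-r}-1$ other translates of $S$, i.e.\ the sets $F\setminus S$. The literal statement holds only when $m=r+1$. Then $\mathrm{RM}(0,m)=\{\boldsymbol{0},\boldsymbol{1}\}$, so the only other recovery set is $[2^m]\setminus S$. The later arguments in the paper use only this weaker form, or the case $m=r+1$. You should state and prove that, not the literal clause.

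\textbf{A smaller simplification.} Uniqueness in the first bullet needs neither the minimum-weight-codewords-are-flats theorem nor your unargued claim that the direction space is forced. Suppose $R\neq S$ is a recovery set with $|R|=2^{\ell}$. Then $2^{\ell+1}-2|R\cap S|=|R\triangle S|\ge 2^{r+1}$ forces $\ell=r$ and $R\cap S=\emptyset$, so $R$ cannot contain Server $1$. The condition on Server $2^m$ also plays no role in singling out $S$, since distinct translates are disjoint. The flats theorem is genuinely needed only in the equality case behind the third bullet.
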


\section{Service Rate Characterization for high-order Reed-Muller Codes}

Towards characterizing the SRR of high-order RM codes,
the following results are established in~\cite[Theorem 6,7]{ref14_isit_ly2025service} and its extended version~\cite[Theorem 7--9]{ref16_arxiv_ly2025service}.

\begin{lemma}\label{lm2:bound}
For a message symbol $s_i$ of order $\ell$, its maximum achievable demand is $\lambda_i\leq 1+\frac{\left[ {m-\ell \atop r-\ell-1} \right]_{2}}{\left[ {m-\ell-1 \atop r-\ell} \right]_{2}}=1+\frac{2^{m}-2^{\ell}}{2^{r+1}-2^{\ell}}$. The total achievable demand for all symbols of a given order $\ell$ is also upper bounded by $\sum_{i\in\mathcal{J}_\ell}\lambda_i\leq 1+\frac{2^{m}-2^{\ell}}{2^{r+1}-2^{\ell}}$. 
The total achievable demand for all symbols, regardless of the orders, is upper bounded by $\sum_{i=1}^k \lambda_i\leq 1+\frac{2^m-1}{2^r}$.
\end{lemma}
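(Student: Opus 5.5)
We bound the demand by a weighted counting argument over the servers: each recovery set is charged the number of servers it occupies, and that charge is compared with the total capacity $n=2^m$ of the servers outside the cheap sets. All recovery-set facts come from Lemma~\ref{lm1:rec sets}.

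\textbf{Single symbol.} Fix $s_i$ of order $\ell$ and first assume $\ell<r$. By Lemma~\ref{lm1:rec sets}, the recovery sets of size $2^\ell$ are exactly the cosets of an $\ell$-dimensional subspace (the translates of the set $S$). Hence they partition the $2^m$ servers into $2^{m-\ell}$ disjoint sets. Every other recovery set has size at least $2^{r+1}-2^\ell$.

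A naive per-server count would allow a demand of $2^{m-\ell}$, which is far too large. The real restriction is different: any recovery set of $s_i$ other than a coset has size $\ge 2^{r+1}-2^\ell$, and codewords of the dual $\mathrm{RM}(m-r-1,m)$ force sets using few cosets to be large. I would therefore use the structural statement from the paper's setup, namely that the optimal scheme uses the unique small set $S$ plus the sets of size $2^{r+1}-2^\ell$ avoiding it.

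Assign $\lambda_S=1$ to $S$, which saturates its servers. Then spread the demand uniformly over the $\left[{m-\ell \atop r+1-\ell}\right]_2$ sets of size $2^{r+1}-2^\ell$, each of which lies in $[n]\setminus S$. Each server appears in $\left[{m-\ell-1\atop r-\ell}\right]_2$ of these sets, so the load on every server is at most $1$ when each set receives weight $1/\left[{m-\ell-1\atop r-\ell}\right]_2$. The achieved demand is then
\[
1+\frac{\left[{m-\ell\atop r+1-\ell}\right]_2}{\left[{m-\ell-1\atop r-\ell}\right]_2}.
\]

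For the converse, take any feasible allocation $\lambda_{i,j}$. Each set other than $S$ that intersects $[n]\setminus S$ uses at least $2^{r+1}-2^\ell$ servers there (after a translation/symmetry averaging that makes $S$ the saturated set). Summing the capacity of $[n]\setminus S$ gives
\[
(\lambda_i-1)(2^{r+1}-2^\ell)\le 2^m-2^\ell,
\]
which is exactly the claimed bound. The equality of the Gaussian ratio with $\frac{2^m-2^\ell}{2^{r+1}-2^\ell}$ is a direct computation.

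\textbf{Main obstacle.} The averaging step is the hard part. Other $2^\ell$-cosets might also be cheap recovery sets. I would handle this by the symmetry of the affine group. A recovery set of $s_i$ corresponds to a dual codeword (in $\mathrm{RM}(m-r-1,m)$, up to adding $\boldsymbol e_i$-witness) whose support minus one point has the right structure. Averaging any optimal allocation over the translations that preserve the coordinate of $s_i$ makes every server's load uniform. The linear program then reduces to a weighted count in which each recovery set of size $|R|$ costs $|R|/2^m$ of the total capacity. Weighing sets of size $2^\ell$ against larger ones then gives the bound.

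\textbf{Order totals and the overall total.} For the bound on $\sum_{i\in\mathcal J_\ell}\lambda_i$, the same counting applies. For each symbol of order $\ell$, every recovery set except possibly one $2^\ell$-set per "direction" has size $\ge 2^{r+1}-2^\ell$, and the cheap sets of different symbols of order $\ell$ all contain a common structure, so their total weight is at most $1$.

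For the bound on $\sum_{i=1}^k\lambda_i$, every recovery set has size at least $2^r$ except the cheap sets of lower order, whose total contribution is at most $1$. The total-capacity inequality $\sum_R \lambda_R|R|\le 2^m$ then yields
\[
\sum_{i=1}^k\lambda_i\le 1+\frac{2^m-1}{2^r}.
\]
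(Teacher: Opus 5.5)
The paper does not prove this lemma itself (it is imported from \cite{ref14_isit_ly2025service,ref16_arxiv_ly2025service}), so your argument has to stand on its own. The single-symbol converse, as written, does not.

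Your opening claim is false for $\ell<r$. You say the size-$2^\ell$ recovery sets are all the translates of $S$ and partition the servers. But summing the codeword over the translate $\{x: x_u=a_u \text{ for all } u\notin T\}$ gives $\sum_{U\supseteq T,\,|U|\le r}a^{U\setminus T}s_U$. This contains $s_{T\cup\{j\}}$ whenever $a_j=1$ for some $j\notin T$, so the translate does not recover $s_T$. Lemma~\ref{lm1:rec sets} says exactly that $S$ is the only recovery set of size below $2^{r+1}-2^\ell$. Indeed, if your claim held, the $2^{m-\ell}$ disjoint translates would make $\lambda_i=2^{m-\ell}$ achievable, contradicting the bound you are proving.

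So the ``main obstacle'' does not exist, and the proposed remedy fails anyway. By the same computation, a translation $x\mapsto x+a$ maps recovery sets of $s_i$ to recovery sets of $s_i$ only if $a_u=0$ for all $u\notin T$. Moreover, any permutation of servers preserving the family of recovery sets preserves sizes, hence fixes $S$. Averaging can therefore only equalize loads within $S$ and within its complement, never across them. You attribute the key converse fact --- every recovery set $R\neq S$ has at least $2^{r+1}-2^\ell$ servers outside $S$ --- to this averaging, so that step is unproved. Separately, ``the optimal scheme uses $S$ plus the sets of size $2^{r+1}-2^\ell$'' presupposes the conclusion.

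The fact is true and needs no symmetry. Over $\mathbb{F}_2$ every coefficient of a minimal recovery set equals $1$, so $\boldsymbol{G}\mathbf{1}_R=\boldsymbol{G}\mathbf{1}_S=\boldsymbol{e}_i$. Hence $\mathbf{1}_{R\triangle S}$ is a nonzero codeword of $\mathrm{RM}(m-r-1,m)$, whose minimum distance is $2^{r+1}$, and $|R\setminus S|\ge 2^{r+1}-|S\setminus R|\ge 2^{r+1}-2^\ell$. With $\lambda_S\le 1$, the capacity of $[n]\setminus S$ gives $(\lambda_i-\lambda_S)(2^{r+1}-2^\ell)\le 2^m-2^\ell$, which is the bound. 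In dual-LP form these are the weights of Case~1 in the proof of Theorem~\ref{thm:theorem2}, with $R_{j,1}$ replaced by $S$.

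Alternatively, use only the sizes in Lemma~\ref{lm1:rec sets} and the total-capacity inequality $\sum_R\lambda_R|R|\le 2^m$, which holds for every allocation with no averaging. With $a=\lambda_S\le 1$ it gives $\lambda_i\le\frac{2^m}{2^{r+1}-2^\ell}+a\bigl(1-\frac{2^\ell}{2^{r+1}-2^\ell}\bigr)$. This is nondecreasing in $a$ since $\ell\le r$, so $a=1$ yields the bound, exactly as in the proof of Theorem~\ref{thm:theorem3}.

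This second route is what your last two parts need. The count outside $S$ does not transfer to several symbols, because their cheap sets $S_T$ differ.
For order $\ell$, all $S_T$ contain server~$1$, so $\sum_T\lambda_{S_T}\le 1$, and the identical computation bounds $\sum_{i\in\mathcal{J}_\ell}\lambda_i$.
For the global total, the recovery sets of size below $2^r$ are exactly the $S_T$ with $|T|<r$. Each has size at least $1$ and all contain server~$1$, so their total rate $A$ satisfies $A\le 1$. Then $A+2^r\bigl(\sum_i\lambda_i-A\bigr)\le 2^m$ gives $\sum_i\lambda_i\le 2^{m-r}+A(1-2^{-r})\le 1+\frac{2^m-1}{2^r}$.

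With these justifications supplied, your achievability paragraph (the construction used in Theorem~\ref{thm:theorem4}) and your last two parts are correct.
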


Take $\mathrm{RM}(2,4)$ as an example. The authors of~\cite{ref14_isit_ly2025service} have proved the following inequalities: $\lambda_1\leq \frac{22}{7}$; $\lambda_i\leq \frac{10}{3}$ for $i\in[2,5]$; $\lambda_{i}\leq 4$ for $i\in[6,11]$; $\lambda_2+\lambda_3+\lambda_4+\lambda_5\leq \frac{10}{3}$; $\lambda_{6}+\dots+\lambda_{11}\leq 4$; and finally $\lambda_1+\lambda_{2}+\dots+\lambda_{11}\leq 4.75$. The first three types of inequalities are shown to be achievable and they constitute a \emph{maximal achievable simplex} $\mathcal{A}\subseteq \mathcal{S}(\boldsymbol{G})$, whereas the last inequality defines a \emph{minimal enclosing simplex} $\Omega$ where $\mathcal{S}(\boldsymbol{G})\subseteq \Omega$.  

In this paper, what we are attempting is to dig more into the SRR, by either constructing achievable regions beyond the \emph{maximal achievable simplex} $\mathcal{A}$ or providing new types of upper bound inequalities which are tighter than the \emph{minimal enclosing simplex} $\Omega$. We first focus on the service rate region of $\mathrm{RM}(r,r+1)$, where $m = r + 1$.

\subsection{The Complete Characterization of SRR of $\mathrm{RM}(r,r+1)$}

\begin{theorem}\label{thm:theorem1}
For $r \ge 2$, consider the $r$-th order binary Reed-Muller code $\mathrm{RM}(r,r+1)$ of dimension $k = 2^{r+1}-1$. Its service rate region is given by
\[
\begin{aligned}
 \mathcal{S}(\boldsymbol{G})\!
=\!
\left\{\!
\boldsymbol{\lambda} \in \mathbb{R}_{\ge 0}^{k}
:
\sum_{i=1}^{k} \lambda_i \le 2 \!
\right\}
\! 
=\!
\operatorname{conv}\bigl(\!\{ \boldsymbol{0}_k, \boldsymbol{v}_1, \dots, \boldsymbol{v}_k \}\!\bigr),   
\end{aligned}
\]
where $\boldsymbol{v}_i = 2\,\boldsymbol{e}_i$, for $i \in [k]$.
\end{theorem}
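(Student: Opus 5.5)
The plan is to show two inclusions: the simplex $\{\sum_i \lambda_i\le 2\}$ contains $\mathcal{S}(\boldsymbol{G})$, and each vertex $2\boldsymbol{e}_i$ is achievable. Since $\mathcal{S}(\boldsymbol{G})$ is convex and contains $\boldsymbol{0}_k$, the second inclusion gives that the simplex lies inside $\mathcal{S}(\boldsymbol{G})$.

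\textbf{Upper bound.} Set $m=r+1$ in Lemma~\ref{lm2:bound}. The total-demand bound becomes
\[
\sum_{i=1}^k\lambda_i\le 1+\frac{2^{r+1}-1}{2^r}=3-2^{-r},
\]
which is weaker than $2$, so we need a direct counting argument. The key structural fact is that $\mathrm{RM}(r,r+1)$ is the even-weight code: its dual is $\mathrm{RM}(0,r+1)=\{\boldsymbol 0,\boldsymbol 1\}$. Hence the only linear dependency among the $n=2^{r+1}$ columns is the sum of all of them, any $n-1$ columns form a basis of $\mathbb{F}_2^k$, and $k=n-1$.

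It follows that every minimal recovery set $R$ of $s_i$ has size at least $n/2=2^r$. To see this, let $\boldsymbol{e}_i=\sum_{j\in R}\boldsymbol{g}_j$. Since $\sum_{j\in[n]}\boldsymbol{g}_j=\boldsymbol 0$, we also have $\boldsymbol{e}_i=\sum_{j\in[n]\setminus R}\boldsymbol{g}_j$, so $R$ and its complement are the only two representations of $\boldsymbol{e}_i$. In fact both are exactly the two recovery sets of $s_i$, because a minimal set corresponds to a unique representation. Now $\boldsymbol{e}_i$ is a row-selector, i.e.\ a linear functional: the coordinate $i$ of $\sum_{j\in R}\boldsymbol g_j$ being $1$ means the indicator of $R$ has inner product $1$ with row $i$, and $0$ with every other row. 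So the indicator $\boldsymbol{1}_R$ (and likewise $\boldsymbol 1_{R^c}$) is a dual-basis vector with respect to the generator matrix. Both $\boldsymbol 1_R$ and $\boldsymbol 1_{R^c}$ have weight at least $d(\mathrm{RM}(r,r+1))=2$, but that alone is too weak. The better route is Lemma~\ref{lm1:rec sets} directly with $m=r+1$. For $\ell<r$, the recovery sets are the set $S$ of size $2^\ell$ and sets of size at least $2^{r+1}-2^\ell$. Since there are only two recovery sets, they must be $S$ and its complement, of sizes $2^\ell$ and $2^{r+1}-2^\ell$. For $\ell=r$, both have size $2^r$.

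So each $s_i$ has exactly two recovery sets $R_i$ and $R_i^c$, which partition $[n]$. Summing constraint \eqref{equa:1b} over all servers gives
\[
\sum_i\bigl(\lambda_{i,1}|R_i|+\lambda_{i,2}|R_i^c|\bigr)\le n .
\]
That alone gives only $\lambda_i\ge$ weighted sizes, so it is better to use a cleverer pair of servers. Servers $1$ and $2^m$ work. Lemma~\ref{lm1:rec sets} says $S_i$ contains server $1$ and not server $2^m$, so $S_i^c$ contains server $2^m$ and not server $1$. Hence every recovery set of every symbol contains exactly one of servers $1$ and $2^m$. Adding the capacity constraints for these two servers yields $\sum_i\lambda_i\le 2$. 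This is the main step, and it hinges on correctly establishing the two-recovery-set structure.

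\textbf{Achievability.} For $2\boldsymbol e_i$, assign $\lambda_{i,1}=\lambda_{i,2}=1$ to $S_i$ and $S_i^c$. These are disjoint, so every server carries load exactly $1$. Convexity then gives the full simplex, completing the equality with $\operatorname{conv}(\{\boldsymbol 0_k,\boldsymbol v_1,\dots,\boldsymbol v_k\})$.
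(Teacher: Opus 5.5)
Your proof is correct and is essentially the paper's: each symbol has exactly the two complementary recovery sets $S_i$ and $S_i^c$, summing the capacity constraints of servers $1$ and $2^m$ is precisely the paper's dual certificate $w_1=w_{2^m}=1$, and loading $S_i$ and $S_i^c$ at rate $1$ each gives $2\boldsymbol e_i$. Your coset argument is a welcome addition: the solutions of $\boldsymbol G\boldsymbol x=\boldsymbol e_i$ form a coset of $\ker\boldsymbol G=\mathrm{RM}(0,r+1)=\{\boldsymbol 0,\boldsymbol 1\}$, which rigorously excludes any further recovery sets, whereas the paper only tacitly infers this from Lemma~\ref{lm1:rec sets}. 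Do delete the stray assertions that every minimal recovery set has size at least $2^r$ and that $\boldsymbol 1_R$ has weight at least $2$; both are false for low orders (e.g.\ $\{1\}$ recovers $s_1$), though neither is used in your final argument.
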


\begin{proof}
For any message symbol $s_j$, $j \in [k]$, of order $\ell$, let $R_{j,1}$ denote its unique recovery set of size $2^{\ell}$. 
By Lemma~\ref{lm1:rec sets}, it satisfies
$1 \in R_{j,1}$ and $2^{m} \notin R_{j,1}$.

Since $m = r+1$, Lemma~\ref{lm1:rec sets} further implies that symbol $s_j$ has exactly one recovery set of size $2^{m}-2^{\ell}$, which we denote by $R_{j,2}$.
The recovery sets $R_{j,1}$ and $R_{j,2}$ form a partition of $[2^{m}]$. Consequently, for the recovery set $R_{j,2}$ it holds that
\(
1 \notin R_{j,2},
\;
2^{m} \in R_{j,2}.
\)

Throughout the paper we frequently use the property of linear programming (LP) and its duality. Consider the LP instance with constraints in the form of Equations (\ref{equa:1a}) and (\ref{equa:1b}) and objective function $\sum_{i=1}^{k} \lambda_i$. Its dual LP is then 
\[
\begin{aligned}
&\min \quad \sum_{v=1}^{2^{m}} w_v,\\
&\text{s.t.} \quad
\sum_{v \in R_{j,q}} w_v \ge 1,
\quad \forall\, j \in [k],\ q \in \{1,2\},
\end{aligned}
\]
where server $v \in [2^m]$ is assigned a non-negative weight $w_v$.

Based on the structure of $R_{j,1}$ and $R_{j,2}$ for all \(j \in [k]\) derived above, a feasible solution $\boldsymbol{w}$ of the dual LP is
\[
w_v =
\begin{cases}
1, & v = 1 \text{ or } v = 2^{m},\\[4pt]
0, & \text{otherwise}.
\end{cases}
\]

Therefore, the optimal value of the original LP instance, which equals the optimal value of its dual LP according to the \emph{strong duality theorem} of linear programming, is then upper bounded by $\sum_{v=1}^{2^{m}} w_v = 2$. Hence, it holds that
\begin{small}
\begin{equation*}
\mathcal{S}(\boldsymbol{G})\subseteq\left\{\boldsymbol{\lambda} \in \mathbb{R}_{\ge 0}^{k}:\sum_{i=1}^{k} \lambda_i \le 2\right\}=\operatorname{conv}\bigl(\{ \boldsymbol{0}_k, 2\boldsymbol{e}_1, \dots, 2\boldsymbol{e}_k \}\bigr).
\end{equation*}
\end{small}

On the other hand, for each $j \in [k]$, since the recovery sets $R_{j,1}$ and $R_{j,2}$ form a partition of $[2^{m}]$,
the rate vector $\boldsymbol{\lambda}=2\boldsymbol{e}_j$ is achievable by assigning $\lambda_{j,1} = 1$ and $\lambda_{j,2} = 1$. Thus $2\boldsymbol{e}_j\in \mathcal{S}(\boldsymbol{G})$ for all $j\in[k]$. Trivially $\mathbf{0}_k\in\mathcal{S}(\boldsymbol{G})$ as well. By the
convexity of $\mathcal{S}(\boldsymbol{G})$, any convex combination
of these allocation vertices lies within $\mathcal{S}(\boldsymbol{G})$. Consequently,
\(
\operatorname{conv}\bigl(\{ \boldsymbol{0}_k, 2\boldsymbol{e}_1, \dots, 2\boldsymbol{e}_k \}\bigr)
\subseteq
\mathcal{S}(\boldsymbol{G}).
\)

Combining both directions, we conclude that
\begin{small}
\begin{equation*}
\mathcal{S}(\boldsymbol{G})=\left\{\boldsymbol{\lambda} \in \mathbb{R}_{\ge 0}^{k}:\sum_{i=1}^{k} \lambda_i \le 2\right\}=\operatorname{conv}\bigl(\{ \boldsymbol{0}_k, 2\boldsymbol{e}_1, \dots, 2\boldsymbol{e}_k \}\bigr)
\end{equation*}
\end{small}
for any $\mathrm{RM}(r,r+1)$. 
\end{proof}
 
\subsection{New Inequalities Involving only Two Symbols}
The total maximum achievable demand for any two message symbols of the same order is given
in~\cite{ref14_isit_ly2025service}. 
We extend the analysis to any two message symbols of different orders, by deriving new upper bounds and proving its achievability.
In this way we obtain the complete characterization of the SRR when confined to  two message symbols, which is the two-dimensional projections of the exact SRR polytope. 

Fix an RM code $\mathrm{RM}(r,m)$. Consider two message symbols $s_i$ and $s_j$, \(i,j \in[k]\), with orders $\ell_1$ and $\ell_2$, respectively, where $0\leq \ell_1 < \ell_2\leq r$. 
Let $T_i \subseteq\{\bm{v}_1,\bm{v}_2,\dots,\bm{v}_m\}$ with $|T_i|=\ell_1$ denote the \emph{index set} of the Boolean function corresponding to the message symbol $s_i$. Define $T_j$ similarly for $s_j$.

By Lemma~\ref{lm1:rec sets}, the message symbol $s_i$ has a unique recovery set of size $2^{\ell_1}$, denoted by $R_{i,1}$.
Similarly, $s_j$ has a unique recovery set $R_{j,1}$ of size $2^{\ell_2}$. 
Furthermore, we will need the following corollary of Lemma \ref{lm1:rec sets}, which is not explicitly proposed but can be inferred from the proofs in \cite{ref14_isit_ly2025service}.

\begin{corollary}\label{Cor1:T2}
Following the notations above, we have
\begin{itemize}
    \item If $T_i \subseteq T_j$, then $R_{i,1} \subseteq R_{j,1}$, $|R_{j,1} \setminus R_{i,1}| = 2^{\ell_2}-2^{\ell_1}$.
    \item Otherwise, let $|T_i \cap T_j| = k<\ell_1$. Then we have $|R_{i,1} \cap R_{j,1}| = 2^k \text{ and } |R_{j,1} \setminus R_{i,1}| = 2^{\ell_2}-2^k$.
\end{itemize}
\end{corollary}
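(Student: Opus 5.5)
To prove Corollary~\ref{Cor1:T2}, I will first pin down the unique small recovery set $R_{i,1}$ of a symbol of order $\ell$ with index set $T$. Index servers by $\boldsymbol{u}\in\mathbb{F}_2^m$, so that server $1$ is $\boldsymbol{u}=\boldsymbol{0}$ and server $2^m$ is $\boldsymbol{u}=\boldsymbol{1}$. My claim is that
\[
R_{i,1}=\{\boldsymbol{u}\in\mathbb{F}_2^m:\ u_t=0 \text{ for all } \boldsymbol{v}_t\notin T\},
\]
the $\ell$-dimensional coordinate subspace spanned by the variables in $T$. This set has $2^{\ell}$ elements and contains $\boldsymbol{0}$. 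It omits $\boldsymbol{1}$ as long as $T$ is not all $m$ variables, which holds because $\ell\le r<m$.

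To see that it is a recovery set, use the standard Möbius/interpolation identity. For a monomial $v_S$ with $S\subseteq[m]$, the sum $\sum_{\boldsymbol{u}\in R_{i,1}} \boldsymbol{u}^S$ equals the number of $\boldsymbol{u}$ in the subspace with $u_t=1$ for all $t\in S$. If $S\not\subseteq T$, this number is $0$. If $S\subseteq T$, it is $2^{\ell-|S|}$, which is odd only when $S=T$. Hence the sum of the columns $\boldsymbol{g}_{\boldsymbol{u}}$ over $\boldsymbol{u}\in R_{i,1}$ is exactly $\boldsymbol{e}_i$ over $\mathbb{F}_2$, and the set recovers $s_i$.

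By the uniqueness part of Lemma~\ref{lm1:rec sets}, any recovery set of size $2^\ell$ that contains Server 1 and omits Server $2^m$ must therefore be this coordinate subspace. This identifies $R_{i,1}$, and likewise $R_{j,1}$ is the coordinate subspace associated with $T_j$.

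Both bullets then follow from counting. The intersection $R_{i,1}\cap R_{j,1}$ consists of the vectors $\boldsymbol{u}$ supported inside $T_i\cap T_j$, so it is the coordinate subspace on $T_i\cap T_j$ and has size $2^{|T_i\cap T_j|}$.
\begin{itemize}
\item If $T_i\subseteq T_j$, the support condition for $R_{i,1}$ is stronger than the one for $R_{j,1}$, so $R_{i,1}\subseteq R_{j,1}$. Then $|R_{j,1}\setminus R_{i,1}|=2^{\ell_2}-2^{\ell_1}$.
\item Otherwise $T_i\not\subseteq T_j$, so $k=|T_i\cap T_j|<\ell_1$. The intersection has size $2^k$, and $|R_{j,1}\setminus R_{i,1}|=2^{\ell_2}-2^k$.
\end{itemize}

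The only delicate step is the identification of $R_{i,1}$: the whole argument rests on showing that the coordinate subspace is indeed a recovery set, since uniqueness then does the rest. I would verify the parity computation carefully, in particular that monomials $S\subsetneq T$ contribute an even count $2^{\ell-|S|}\ge 2$.

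Minimality of this recovery set is not needed for the corollary itself. It does follow anyway: any proper subset would be a recovery set of size below $2^{\ell}$, while Lemma~\ref{lm1:rec sets} gives $2^\ell$ as the smallest possible size of a recovery set.
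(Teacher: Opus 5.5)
Your proof is correct. The paper itself gives no argument for Corollary~\ref{Cor1:T2}. It only says the statement ``can be inferred from the proofs in'' the Ly--Soljanin--Lalitha paper, so it leans on the geometric description there, in which $R_{i,1}$ is the $\ell_1$-dimensional coordinate subspace spanned by the variables in $T_i$. The paper also uses this description implicitly in Theorem~\ref{thm:theorem4}. You reach the same identification using only the \emph{stated} content of Lemma~\ref{lm1:rec sets}, plus a direct parity count showing that the columns indexed by the coordinate subspace $C_T$ sum to $\boldsymbol{e}_i$; the intersection counts then follow immediately. This makes the corollary checkable inside the paper. The cited geometric picture gives more, since it also describes all the larger recovery sets, which the paper needs later in Theorem~\ref{thm:theorem4}.

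One correction to your commentary: minimality is not optional. Throughout the paper ``recovery set'' means minimal recovery set. So the uniqueness clause of Lemma~\ref{lm1:rec sets} yields $C_T=R_{i,1}$ only once you know $C_T$ is minimal. Your closing argument supplies this: a proper recovering subset would contain a minimal recovery set of size below $2^{\ell}$, contradicting the size bounds of Lemma~\ref{lm1:rec sets}. It is therefore a required step and belongs before the uniqueness step, not as an aside. Alternatively, take the submatrix with rows $v_S$, $S\subseteq T$, and columns $\boldsymbol{u}\in C_T$. Its entries are $[S\subseteq \mathrm{supp}(\boldsymbol{u})]$, and after ordering subsets by size it is unitriangular. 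Hence those columns are independent and the all-ones combination is the only one.

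Finally, if you test your identification on the paper's $\mathrm{RM}(2,4)$ example with the displayed generator matrix, you get $R_{2,1}=\{1,9\}$, $R_{4,1}=\{1,3\}$, $R_{6,1}=\{1,5,9,13\}$, not the listed $\{1,2\}$, $\{1,5\}$, $\{1,2,3,4\}$. The example silently reverses the variable labels. This is harmless here, because the corollary only concerns cardinalities.
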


\begin{example}
Consider $\mathrm{RM}(2,4)$. Message symbols $s_2$, $s_4$, and $s_6$ correspond to the row index sets 
$\{\boldsymbol{v}_4\}$, $\{\boldsymbol{v}_2\}$, and $\{\boldsymbol{v}_3,\boldsymbol{v}_4\}$, respectively.
Therefore, we have
\begin{itemize}
    \item \(
R_{2,1} = \{1,2\} \subseteq \{1,2,3,4\} = R_{6,1}, \;
\lvert R_{6,1} \setminus R_{2,1} \rvert = 2.
\)
\item \(
R_{4,1} = \{1,5\},\lvert R_{6,1} \cap R_{4,1} \rvert=1,\lvert R_{6,1} \setminus R_{4,1} \rvert = 3
\).
\end{itemize}
\end{example}

\begin{theorem}\label{thm:theorem2}
Consider two message symbols $s_i$ and $s_j$, with orders $\ell_1$ and $\ell_2$, respectively. 
Assume that $\ell_1 < \ell_2$. Then, the total achievable demand for $s_i$ and $s_j$ is upper bounded by
\[
\lambda_i + \lambda_j \le 1 + \frac{2^{m} - 2^{\ell_2}}{2^{r+1} - 2^{\ell_2}}.  \tag{2}\label{equa:2}
\]
\end{theorem}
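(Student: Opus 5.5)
The plan is to argue by linear programming duality, exactly as in the proof of Theorem~\ref{thm:theorem1}. Restrict the LP of Equations~(\ref{equa:1a})--(\ref{equa:1b}) to the two coordinates $\lambda_i,\lambda_j$, setting all other demands to zero, and maximize $\lambda_i+\lambda_j$. The dual asks for non-negative server weights $w_v$ such that every recovery set of $s_i$ and every recovery set of $s_j$ has total weight at least $1$. By weak duality, any feasible $\boldsymbol{w}$ with
\[
\sum_{v=1}^{2^m} w_v = 1+\frac{2^{m}-2^{\ell_2}}{2^{r+1}-2^{\ell_2}}
\]
proves (\ref{equa:2}). Write $c=\frac{1}{2^{r+1}-2^{\ell_2}}$. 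Since $\ell_2\le r$ we have $2^{r+1}-2^{\ell_2}\ge 2^{\ell_2}$, i.e.\ $c\le 2^{-\ell_2}$. The target total therefore naturally splits as mass $1$ placed inside $R_{j,1}$ and mass $c$ on each of the $2^m-2^{\ell_2}$ servers outside $R_{j,1}$.

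First I would check the recovery sets of $s_j$. By Lemma~\ref{lm1:rec sets}, every recovery set other than $R_{j,1}$ has size at least $2^{r+1}-2^{\ell_2}$; this also covers $\ell_2=r$, where all sets have size $2^r=2^{r+1}-2^r$. If every server receives weight at least $c$, all these sets automatically reach weight $1$, and $R_{j,1}$ reaches $1$ by construction. The same observation handles every recovery set of $s_i$ other than $R_{i,1}$, because those have size at least $2^{r+1}-2^{\ell_1}\ge 2^{r+1}-2^{\ell_2}$. So the only delicate constraint is the small set $R_{i,1}$ of size $2^{\ell_1}$.

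Second, I would exploit the freedom in distributing the unit mass inside $R_{j,1}$. Concentrate it on $R_{i,1}\cap R_{j,1}$, which is nonempty since it contains Server $1$ and has size $2^{\ell_1}$ or $2^{k}$ by Corollary~\ref{Cor1:T2}, while keeping weight at least $c$ elsewhere in $R_{j,1}$. Let $R_{i,1}\cap R_{j,1}$ have size $a$ ($a=2^{\ell_1}$ if $T_i\subseteq T_j$, else $a=2^k$). Give each server of $R_{i,1}\cap R_{j,1}$ weight $x$ and each server of $R_{j,1}\setminus R_{i,1}$ weight $c$, with $ax+(2^{\ell_2}-a)c=1$; this is consistent with $x\ge c$ because $2^{\ell_2}c\le 1$. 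The weight of $R_{i,1}$ then equals $1-\bigl(w(R_{j,1}\setminus R_{i,1})-w(R_{i,1}\setminus R_{j,1})\bigr)$. A naive uniform choice leaves a deficit of order $(2^{\ell_2}-2^{\ell_1})c$, which must be repaired.

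This repair is the main obstacle. My first attempt would use the finite-geometry description behind Lemma~\ref{lm1:rec sets}: the recovery sets of size $2^{r+1}-2^{\ell}$ should be exactly $F\setminus R_{\cdot,1}$ for affine $(r+1)$-flats $F\supseteq R_{\cdot,1}$, and the count of such flats matches the Gaussian binomial in Lemma~\ref{lm1:rec sets}. In particular, these sets are disjoint from the corresponding small set. With this description I would lower the weight on $R_{j,1}\setminus R_{i,1}$ below $c$, even to $0$, and move that mass onto $R_{i,1}$. I would then verify the $s_j$ constraints directly, since those sets avoid $R_{j,1}$ and so are unaffected. For the $s_i$ sets $F\setminus R_{i,1}$, I would count how many servers of $R_{j,1}\setminus R_{i,1}$ they can contain using $|F\cap R_{j,1}|$ as a flat dimension. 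If the flat $F$ containing both $R_{i,1}$ and $R_{j,1}$ breaks this, I would fall back to a fractional symmetrization. This means averaging the dual weights over the stabilizer of $R_{i,1},R_{j,1}$ in the affine group, so that the weights depend only on the orbit of a server. The problem then reduces to a small LP in a few orbit variables, which I would solve and match to the claimed bound.
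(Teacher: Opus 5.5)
\textbf{There is a genuine gap.} Your framework matches the paper's: weak duality, weight $c=\frac{1}{2^{r+1}-2^{\ell_2}}$ on every server outside $R_{j,1}$, and one unit of mass inside $R_{j,1}$ shifted toward $R_{i,1}$. However, the proposal never exhibits a feasible dual vector, and the one concrete repair you name is infeasible when $T_i\not\subseteq T_j$. That repair lowers the weight on $R_{j,1}\setminus R_{i,1}$ ``even to $0$''.

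\emph{Counterexample to zeroing all of $R_{j,1}\setminus R_{i,1}$.} Take $\mathrm{RM}(2,4)$ with $T_i=\{\boldsymbol{v}_1\}$ and $T_j=\{\boldsymbol{v}_2,\boldsymbol{v}_3\}$, so $k=0$ and $c=\frac14$. Viewing servers as points of $\mathbb{F}_2^4$, $R_{i,1}$ and $R_{j,1}$ are the coordinate subspaces on $\{1\}$ and $\{2,3\}$. Both lie in the coordinate subspace $F$ on $\{1,2,3\}$. The set $F\setminus R_{i,1}$ is a recovery set of $s_i$ of size $6=2^{r+1}-2^{\ell_1}$. It consists of the three zeroed servers of $R_{j,1}\setminus R_{i,1}$ and three servers of weight $\frac14$, so its weight is $\frac34<1$.

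\emph{The right calibration.} Only $2^{\ell_2}-2^{\ell_1}$ servers of $R_{j,1}\setminus R_{i,1}$ may be zeroed. The paper keeps a set $N\subseteq R_{j,1}\setminus M$ of $2^{\ell_1}-2^k$ servers at weight $c$ and sets $w_1=1-(2^{\ell_1}-2^k)c$, with all other servers of $R_{j,1}$ at $0$. When $T_i\subseteq T_j$, this gives $N=\emptyset$ and $w_1=1$, so your zeroing is fine in that case. Solving your orbit-averaged LP would lead to an averaged form of this vector, but you leave that LP unsolved, so the case $T_i\not\subseteq T_j$ is not proved.

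\emph{Non-minimum-size recovery sets are not handled.} Once any weight drops below $c$, your first-step argument (size at least $2^{r+1}-2^{\ell}$ and every weight at least $c$) no longer applies. Your planned checks cover only the minimum-size sets $F\setminus R_{\cdot,1}$: that they avoid the small set, and a count of $|F\cap R_{j,1}|$. Larger recovery sets of $s_i$ or $s_j$ can meet the zeroed servers and remain unverified.

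\emph{The missing lemma.} For any recovery set $R\neq R_{s,1}$ of a symbol $s$ of order $\ell$, $\sum_{v\in R\triangle R_{s,1}}\boldsymbol{g}_v=\boldsymbol{0}$. So $R\triangle R_{s,1}$ supports a nonzero codeword of the dual code $\mathrm{RM}(m-r-1,m)$, whose minimum distance is $2^{r+1}$. Hence $|R\setminus R_{s,1}|\ge 2^{r+1}-2^{\ell}$.

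\emph{Verification by counting.} With this lemma the check is pure counting over all recovery sets at once, with no flat geometry or symmetrization. Every other recovery set of $s_j$ has at least $2^{r+1}-2^{\ell_2}$ servers of weight $c$. Every other recovery set of $s_i$ has at least $2^{r+1}-2^{\ell_1}$ servers outside $R_{i,1}$. At most $|R_{j,1}\setminus(R_{i,1}\cup N)|=2^{\ell_2}-2^{\ell_1}$ of them have weight $0$, and the rest have weight $c$, giving total weight at least $1$. Finally, $R_{i,1}$ and $R_{j,1}$ each receive weight exactly $1$ by the choice of $w_1$ and $N$.
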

\begin{proof}
By Lemma~\ref{lm1:rec sets} and the proofs in~\cite{ref14_isit_ly2025service}, for symbol $s_i$, any recovery set other than $R_{i,1}$ contains at least $2^{r+1}-2^{\ell_1}$ servers outside $R_{i,1}$, and hence has size at least $2^{r+1}-2^{\ell_1}$. The same holds for $s_j$.
Recall that $\{ R_{i,1}, \ldots, R_{i,t_i} \}$ denotes all $t_i$ recovery sets for the symbol $s_i$. Similarly, $\{ R_{j,1}, \ldots, R_{j,t_j} \}$ denotes all $t_j$ recovery sets for $s_j$.

Consider the LP with constraints in the form of (\ref{equa:1a}) and (\ref{equa:1b}) and objective function is $\lambda_i+\lambda_j$. Its dual LP is given by
\[
\begin{aligned}
&\min \quad \sum_{v=1}^{2^{m}} w_v,\\
&\text{s.t.} \quad
\sum_{v \in R_{d,q}}\! w_v \!\ge\! 1,\; \forall\, d \!\in\! \{i,j\},\ q\! \in\! [t_d].
\end{aligned}
\]

\textbf{Case 1:} $T_i \subseteq T_j$.  
It follows from Corollary~\ref{Cor1:T2} that $R_{i,1}\subseteq R_{j,1} \text{ and }|R_{j,1} \setminus R_{i,1}| = 2^{\ell_2}-2^{\ell_1}$, we can construct a feasible solution to the dual LP as
\[
w_v =
\begin{cases}
1, & v = 1,\\[4pt]
0, & v \in R_{j,1} \text{ and } v \neq 1,\\[6pt]
\dfrac{1}{2^{r+1}-2^{\ell_2}}, & v \notin R_{j,1}.
\end{cases}
\]

Now we explain why it is a feasible solution. 
For \(R_{i,1}\) and \(R_{j,1}\), the constraint is immediately satisfied since both recovery sets contain Server~1 with assigned weight \(w_1=1\).
For symbol \(s_j\), consider the remaining recovery sets \(R_{j,q}\), \(q\in [2,t_j]\),
each containing at least \(2^{r+1}-2^{\ell_2}\) servers outside \(R_{j,1}\). Such servers are assigned a  weight \(w_v=\frac{1}{2^{r+1}-2^{\ell_2}}\), ensuring the constraint $\sum_{v\in R_{j,q}} w_v \ge 1$, for $q\in[2,t_j]$.

For symbol \(s_i\), each recovery set \(R_{i,q}\) with \(q\in[2,t_i]\) contains at least
\(2^{r+1}-2^{\ell_1}\) servers outside \(R_{i,1}\).
Among these servers, there are at most \(|R_{j,1}\setminus R_{i,1}|=2^{\ell_2}-2^{\ell_1}\) servers with weight $0$. Hence,
\[
\sum_{v\in R_{i,q}} w_v \ge 
\frac{(2^{r+1}-2^{\ell_1})-(2^{\ell_2}-2^{\ell_1})}{2^{r+1}-2^{\ell_2}} = 1, \forall q\in[2,t_i]
\]
and thus all dual constraints are satisfied.

\textbf{Case 2:} $T_i\nsubseteq T_j$.
By Corollary~\ref{Cor1:T2},
let $k = |T_i\cap T_j|$,
then $R_{i,1} \cap R_{j,1} = M$, where $|M| = 2^{k}$.
Let $N \subseteq R_{j,1} \setminus M$ be any subset of size $2^{\ell_1}-2^{k}$.
In this case, a feasible solution to the dual LP is given by
\[
w_v =
\begin{cases}
1-\dfrac{2^{\ell_1}-2^k}{2^{r+1}-2^{\ell_2}}, & v = 1,\\[6pt]
\dfrac{1}{2^{r+1}-2^{\ell_2}}, & v \notin R_{j,1}\text{ or }v \in N,\\[4pt]
0, & \text{otherwise}.
\end{cases}
\]
Similar as the previous case, it can be verified that all constraints are satisfied by this solution. Due to space limitations the details are omitted.

Both feasible solutions lead to
\(
\sum w_v
= 1 + \frac{2^{m}-2^{\ell_2}}{2^{r+1}-2^{\ell_2}}
\). According to the strong duality theorem of linear programming, we have
the upper bound $\lambda_i + \lambda_j \leq\sum w_v
= 1 + \frac{2^{m}-2^{\ell_2}}{2^{r+1}-2^{\ell_2}}$. 
\end{proof}

\begin{figure*}[!t]
\centering
\begin{subfigure}[b]{0.21\textwidth}
    \centering
    \includegraphics[width=\textwidth]{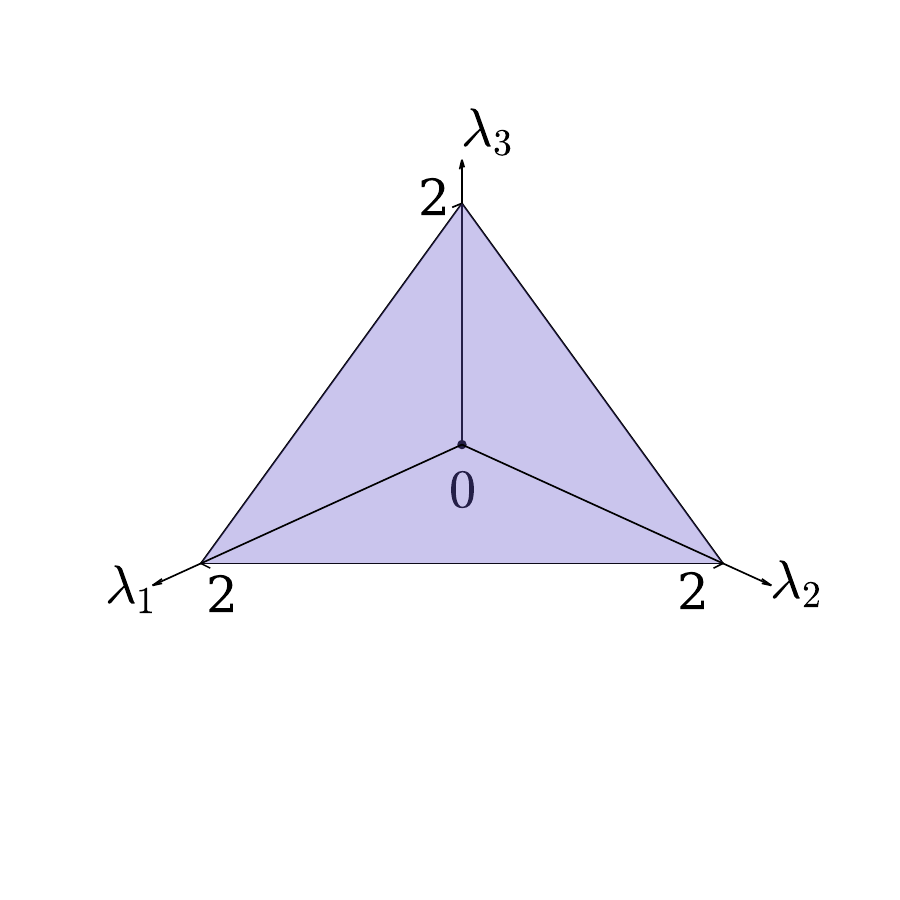}
    \caption{$\mathcal{S}(\boldsymbol{G})$ of $\mathrm{RM}(r,r+1).$}
    \label{fig(a):rm_(r,r+1)}
\end{subfigure}
\hfill
\begin{subfigure}[b]{0.28\textwidth}
    \centering
    \includegraphics[width=\textwidth]{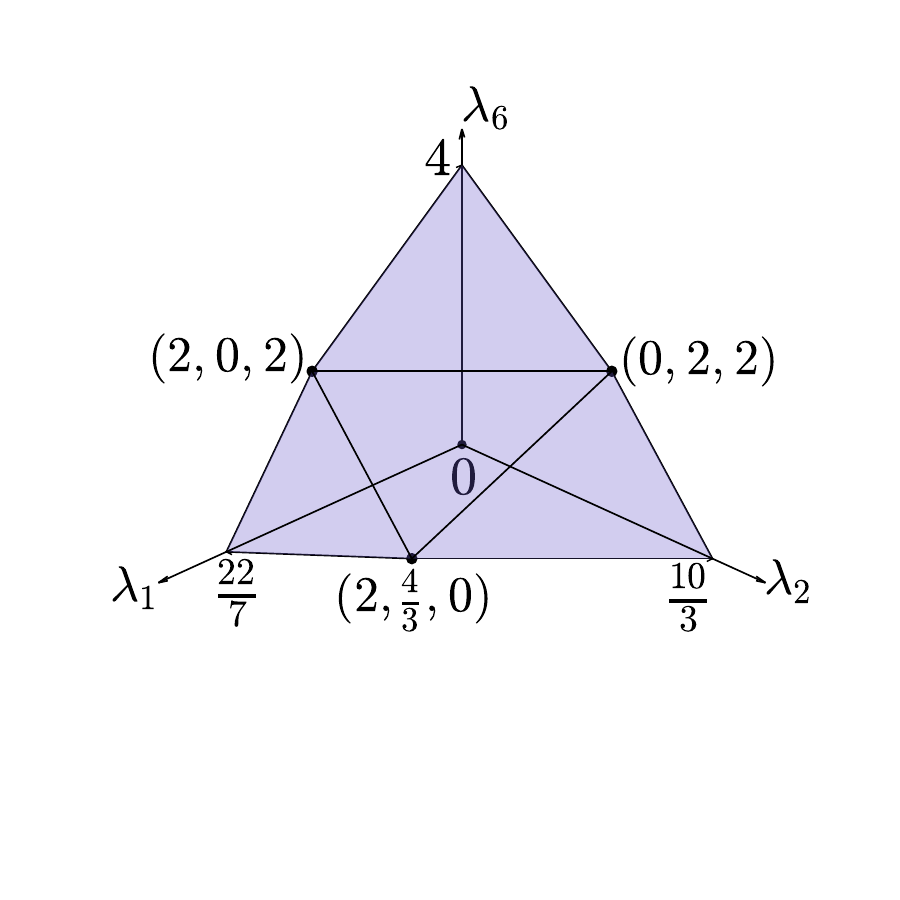}
    \caption{$\mathcal{B}$ of $\mathrm{RM}(2,4)$.}
    \label{fig(b):rm24}
\end{subfigure}
\hfill
\begin{subfigure}[b]{0.26\textwidth}
    \centering
    \includegraphics[width=\textwidth]{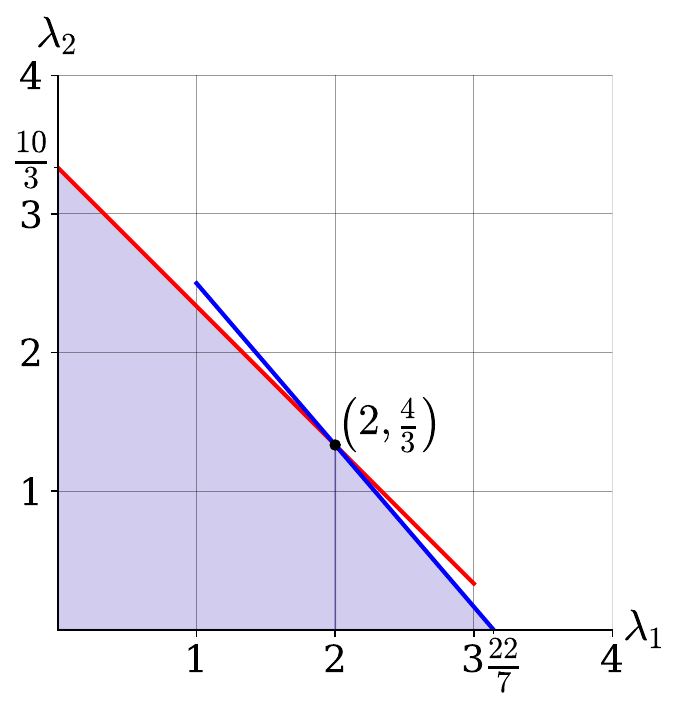}
    \caption{$\mathcal{S}_{1,2}(\boldsymbol{G})$ of $\mathrm{RM}(2,4).$}
    \label{fig(c):s1+2}
\end{subfigure}
\label{fig:three_planes}
\end{figure*}

Besides Constraint~\eqref{equa:2}, we next  use a capacity argument to derive an additional constraint on $\lambda_i$ and $\lambda_j$.
\begin{theorem}\label{thm:theorem3}


Any achievable SRR vector $\boldsymbol{\lambda} = (\lambda_1, \lambda_2, \dots, \lambda_k)$ must satisfy the following constraint: for any two message symbols $s_i$ and $s_j$ with orders $\ell_1 < \ell_2$, where $i,j \in [k]$, their achievable service rates $\lambda_i$ and $\lambda_j$ satisfy
\begin{equation*}
(2^{r+1}-2^{\ell_1})\lambda_i
+ (2^{r+1}-2^{\ell_2})\lambda_j \le 2^{m} + 2^{r+1} - 2^{\ell_1+1}.
\tag{3}\label{equa:3 Add Bound}
\end{equation*}

\end{theorem}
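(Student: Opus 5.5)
The plan is a direct capacity (load-counting) argument that refines the one behind Lemma~\ref{lm2:bound}, rather than an LP-duality construction as in Theorem~\ref{thm:theorem2}. Fix a feasible allocation $\{\lambda_{d,q}\}$ satisfying \eqref{equa:1a}--\eqref{equa:1b}. Only the requests for $s_i$ and $s_j$ will be counted; requests for the other symbols only add load, so dropping them weakens the capacity constraints and keeps the argument valid. Split $\lambda_i=\lambda_{i,1}+(\lambda_i-\lambda_{i,1})$, where $\lambda_{i,1}$ is the portion served by the unique small recovery set $R_{i,1}$ of size $2^{\ell_1}$. Split $\lambda_j$ in the same way, with $R_{j,1}$ of size $2^{\ell_2}$.

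\textbf{Step 1 (total-capacity inequality).} Summing \eqref{equa:1b} over all $2^m$ servers shows that the total server load is at most $2^m$. By Lemma~\ref{lm1:rec sets}, every recovery set of $s_i$ other than $R_{i,1}$ has size at least $2^{r+1}-2^{\ell_1}$. If $\ell_1<r$ this is the stated lower bound; if $\ell=r$ the other sets have size $2^r=2^{r+1}-2^r$, so the same formula still holds. The analogous statement holds for $s_j$. Hence
\[
2^{\ell_1}\lambda_{i,1}+(2^{r+1}-2^{\ell_1})(\lambda_i-\lambda_{i,1})+2^{\ell_2}\lambda_{j,1}+(2^{r+1}-2^{\ell_2})(\lambda_j-\lambda_{j,1})\le 2^m .
\]
Rearranging gives
\[
(2^{r+1}-2^{\ell_1})\lambda_i+(2^{r+1}-2^{\ell_2})\lambda_j\le 2^m+(2^{r+1}-2^{\ell_1+1})\lambda_{i,1}+(2^{r+1}-2^{\ell_2+1})\lambda_{j,1}.
\]

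\textbf{Step 2 (the shared server).} By Lemma~\ref{lm1:rec sets}, both $R_{i,1}$ and $R_{j,1}$ contain Server~1. The capacity constraint \eqref{equa:1b} at Server~1 therefore gives $\lambda_{i,1}+\lambda_{j,1}\le 1$.

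\textbf{Step 3 (maximize the right-hand side).} Since $\ell_1<\ell_2\le r$, the coefficients satisfy
\[
2^{r+1}-2^{\ell_1+1}\ \ge\ 2^{r+1}-2^{\ell_2+1}\ \ge 0 .
\]
So, over nonnegative $\lambda_{i,1},\lambda_{j,1}$ with $\lambda_{i,1}+\lambda_{j,1}\le 1$, the right-hand side is maximized by putting all mass on $\lambda_{i,1}=1$. This gives the bound $2^m+2^{r+1}-2^{\ell_1+1}$, which is exactly \eqref{equa:3 Add Bound}.

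The only delicate point is Step~1. It needs the size lower bound for non-minimal-size recovery sets, including the boundary case $\ell_2=r$, which should be checked against Lemma~\ref{lm1:rec sets}. It also needs the observation that the common Server~1 couples $\lambda_{i,1}$ and $\lambda_{j,1}$; this coupling is what produces a constant better than the trivial one. Everything else is linear bookkeeping.
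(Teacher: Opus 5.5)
Your proof is correct and follows the paper's argument: a total-capacity count using the size lower bounds from Lemma~\ref{lm1:rec sets}, coupled through the shared Server~1, then maximized over the rate placed on $R_{i,1}$. The one difference is that you keep $\lambda_{i,1}$ and $\lambda_{j,1}$ as separate variables with $\lambda_{i,1}+\lambda_{j,1}\le 1$, where the paper substitutes $\lambda_{j,1}=1-a$; using $2^{r+1}-2^{\ell_2+1}\ge 0$, this makes explicit the monotonicity step that the paper leaves implicit.
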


\begin{proof}
Recall that each of $s_i$ and $s_j$ has a unique recovery set of sizes $2^{\ell_1}$ and $2^{\ell_2}$, denoted by $R_{i,1}$ and $R_{j,1}$, respectively.
We assume that the recovery set $R_{i,1}$ is assigned a rate $a$, where
$0 \le a \le 1$.
Since $R_{i,1}$ and $R_{j,1}$ share the first server, the rate assigned to $R_{j,1}$ is at most
$1-a$.

For symbol $s_i$, all recovery sets other than $R_{i,1}$ have cardinality at least
$2^{r+1}-2^{\ell_1}$.
Therefore, the minimum system capacity used by the demand $\lambda_i$ is
\(
2^{\ell_1} a + \bigl(2^{r+1}-2^{\ell_1}\bigr)(\lambda_i - a).
\)
Similarly, for symbol $s_j$, the minimum capacity used by the demand $\lambda_j$ is
\(
2^{\ell_2}(1-a) + \bigl(2^{r+1}-2^{\ell_2}\bigr)\bigl(\lambda_j-(1-a)\bigr).
\)

Since there are $2^m$ servers, each server has uniform capacity $1$, the total
capacity available is $2^m$.
Hence, any $(\lambda_i,\lambda_j)$ in the service rate region must satisfy the following
constraint:
\begin{equation*}
\begin{aligned}
&2^{\ell_1} a + \bigl(2^{r+1}-2^{\ell_1}\bigr)(\lambda_i - a) \\
& + 2^{\ell_2}(1-a)
+ \bigl(2^{r+1}-2^{\ell_2}\bigr)\bigl(\lambda_j-(1-a)\bigr)
\le 2^m.
\end{aligned}
\end{equation*}

Rearranging the above inequality yields a characterization of the feasible region in
terms of $\lambda_i$ and $\lambda_j$:
\begin{equation*}
\begin{aligned}
&(2^{r+1}-2^{\ell_1})\lambda_i
+ (2^{r+1}-2^{\ell_2})\lambda_j \\
&\le 2^{m} + 2^{r+1} - 2^{\ell_2+1}
+ 2\bigl(2^{\ell_2}-2^{\ell_1}\bigr)a .
\end{aligned}
\end{equation*}

Here, $0 \le a \le 1$, and $r,m,\ell_1,\ell_2$ are constants.
Since $\ell_2 > \ell_1$, we have $2^{\ell_2}-2^{\ell_1} > 0$, it follows that the right-hand side of the inequality increases monotonically with $a$.
Consequently, the feasible region of $(\lambda_i,\lambda_j)$ is maximized when $a=1$, which yields the service rate constraint in Equation (\ref{equa:3 Add Bound}).
\end{proof}

Next we will demonstrate that the polytopes characterized by the two previous theorems are indeed achievable.

\begin{theorem}\label{thm:theorem4}
Consider the system $\mathrm{RM}(r,m)$, which provides service only to the symbols $s_i$ and $s_j$ of orders $\ell_1$ and $\ell_2$, respectively, with $\ell_1 < \ell_2$.
The associated two-symbol SRR, which is the projection of the full region $\mathcal{S}(\boldsymbol{G})$ onto the $(\lambda_i,\lambda_j)$-plane, is given by
\[
\mathcal{S}_{i,j}(\boldsymbol{G})
= \operatorname{conv}\!\left(
\left\{
\boldsymbol{0}_k,\;
\lambda_i^{\max}\boldsymbol{e}_i,\;
\lambda_j^{\max}\boldsymbol{e}_j,\;
\boldsymbol{u}_{i,j}
\right\}
\right),
\]
where
\(
\lambda_i^{\max} = 1 + \frac{2^{m}-2^{\ell_1}}{2^{r+1}-2^{\ell_1}}, 
\lambda_j^{\max} = 1 + \frac{2^{m}-2^{\ell_2}}{2^{r+1}-2^{\ell_2}} \text{ and }
\boldsymbol{u}_{i,j} = 2\cdot\boldsymbol{e}_i +
(\frac{2^{m}-2^{r+1}}{2^{r+1}-2^{\ell_2}})\cdot\boldsymbol{e}_j
\)
\end{theorem}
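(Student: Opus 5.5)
The plan is to show two inclusions: every point of $\mathcal{S}_{i,j}(\boldsymbol{G})$ lies in the stated convex hull, and every vertex of the hull is achievable.

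\textbf{Outer bound.} The projection is cut out (at least) by $\lambda_i\le\lambda_i^{\max}$ from Lemma~\ref{lm2:bound}, by \eqref{equa:2} from Theorem~\ref{thm:theorem2}, and by \eqref{equa:3 Add Bound} from Theorem~\ref{thm:theorem3}. Write $D=2^{r+1}-2^{\ell_2}$. I would first compute the vertices of the polygon defined by these inequalities together with $\lambda_i,\lambda_j\ge 0$.
\begin{itemize}
\item The point $\lambda_i^{\max}\boldsymbol{e}_i$ lies on \eqref{equa:3 Add Bound}, since $(2^{r+1}-2^{\ell_1})+(2^m-2^{\ell_1})=2^m+2^{r+1}-2^{\ell_1+1}$.
\item The point $\lambda_j^{\max}\boldsymbol{e}_j$ lies on \eqref{equa:2}.
\item The point $\boldsymbol{u}_{i,j}$ lies on both lines. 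For \eqref{equa:2}, note $2+\frac{2^m-2^{r+1}}{D}=1+\frac{2^m-2^{\ell_2}}{D}$. For \eqref{equa:3 Add Bound}, note $2(2^{r+1}-2^{\ell_1})+(2^m-2^{r+1})$ equals the right-hand side.
\end{itemize}
Since the slope of \eqref{equa:3 Add Bound} is steeper than that of \eqref{equa:2} (because $2^{r+1}-2^{\ell_1}>D$), the polygon is exactly $\operatorname{conv}\{\boldsymbol{0},\lambda_i^{\max}\boldsymbol{e}_i,\boldsymbol{u}_{i,j},\lambda_j^{\max}\boldsymbol{e}_j\}$. Hence $\mathcal{S}_{i,j}(\boldsymbol{G})$ is contained in the hull.

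\textbf{Achievability.} By convexity it suffices to realize the three nonzero vertices. The points $\lambda_i^{\max}\boldsymbol{e}_i$ and $\lambda_j^{\max}\boldsymbol{e}_j$ are achievable by the single-symbol results of~\cite{ref14_isit_ly2025service} quoted in Lemma~\ref{lm2:bound}. Concretely, for a symbol of order $\ell$ one puts rate $1$ on its small set and spreads rate $\frac{2^m-2^{\ell}}{2^{r+1}-2^{\ell}}$ uniformly over its $\left[{m-\ell\atop r+1-\ell}\right]_2$ recovery sets of size $2^{r+1}-2^{\ell}$. The counting in Lemma~\ref{lm1:rec sets} then loads every server outside the small set by exactly $1$.

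The new point is $\boldsymbol{u}_{i,j}$. For it I plan the following allocation.
\begin{enumerate}
\item Give $R_{i,1}$ rate $1$. This saturates server $1$, so $R_{j,1}$ is never used.
\item Serve a second unit of $s_i$ using only recovery sets of the form $F\setminus R_{i,1}$, where $F$ is an $(r+1)$-flat through the origin containing the subcube $R_{j,1}$ (first in the case $T_i\subseteq T_j$). These are among the size-$(2^{r+1}-2^{\ell_1})$ recovery sets of $s_i$ described in Lemma~\ref{lm1:rec sets} and its proof. Spread this unit uniformly over the $\left[{m-\ell_2\atop r+1-\ell_2}\right]_2$ such flats.
\item Each server in $R_{j,1}\setminus R_{i,1}$ then gets load exactly $1$. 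Each server outside $R_{j,1}$ gets load $D/(2^m-2^{\ell_2})$, by the same flat-counting as in Lemma~\ref{lm1:rec sets} applied to flats over $R_{j,1}$.
\item Serve $s_j$ uniformly over its size-$D$ recovery sets, which live entirely outside $R_{j,1}$. The residual capacity per server is $(2^m-2^{r+1})/(2^m-2^{\ell_2})$, and this permits exactly rate $\frac{2^m-2^{r+1}}{D}$ for $s_j$.
\end{enumerate}
This gives $\boldsymbol{u}_{i,j}$. Note that the naive choice of spreading $s_i$'s second unit uniformly over all its large sets does \emph{not} suffice. Concentrating it on $R_{j,1}$ is essential.

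\textbf{Main obstacle.} The step I expect to be hardest is justifying step~2 rigorously: that $F\setminus R_{i,1}$ is a recovery set of $s_i$ for every such $F$. I also need to handle the case $T_i\not\subseteq T_j$, where $R_{i,1}\not\subseteq R_{j,1}$. There I would replace the flats over $R_{j,1}$ by $(r+1)$-flats containing the smallest flat spanned by $R_{i,1}\cup R_{j,1}$, if its dimension $\ell_1+\ell_2-k$ is at most $r+1$. Otherwise I would mimic the dual certificate of Theorem~\ref{thm:theorem2}, Case~2: choose an $\ell_2$-dimensional subset that, like the set $N$ there, absorbs $2^{\ell_1}-2^k$ extra servers, and re-check that the per-server loads still balance to exactly $1$.
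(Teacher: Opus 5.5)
For $T_i\subseteq T_j$ your argument is correct and is the paper's construction in averaged form. The paper splits the $\left[{m-\ell_2\atop r+1-\ell_2}\right]_2$ subspaces $F\supseteq R_{j,1}$ into two groups. A block of $\left[{m-\ell_2-1\atop r-\ell_2}\right]_2$ of them is used by $s_i$ via $F\setminus R_{i,1}$, and the rest are used by $s_j$ via $F\setminus R_{j,1}$, all at rate $1/\left[{m-\ell_2-1\atop r-\ell_2}\right]_2$. You average this over the choice of block, and both versions load every server exactly $1$. Your step~2 is not a real obstacle. Identify servers with $\mathbb{F}_2^m$ and $R_{d,1}$ with the coordinate subspace spanned by $T_d$; then $\chi_{F\setminus R_{i,1}}=\chi_{R_{i,1}}+\chi_F$ with $\chi_F\in\mathrm{RM}(m-r-1,m)=\mathrm{RM}(r,m)^\perp$. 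Your outer bound is the paper's.

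You rightly flag the case $T_i\not\subseteq T_j$. That is where the gap is, and it cannot be closed in general. The paper's proof also skips it silently: its count $|\mathcal{R}^{\ast}_{i,j}|=\left[{m-\ell_2\atop r+1-\ell_2}\right]_2$ and the identity $R_i'\cup R_{i,1}=R_j'\cup R_{j,1}$ presuppose $R_{i,1}\subseteq R_{j,1}$.

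Since $\boldsymbol{u}_{i,j}$ attains \eqref{equa:3 Add Bound} with equality, the capacity count behind Theorem~\ref{thm:theorem3} must be tight there. This forces the following:
\begin{itemize}
\item $R_{i,1}$ carries rate $1$ and $R_{j,1}$ carries rate $0$.
\item Every other set used has minimum size, so it is $F\setminus R_{i,1}$, resp.\ $G\setminus R_{j,1}$, for $(r+1)$-dimensional subspaces $F\supseteq R_{i,1}$, $G\supseteq R_{j,1}$.
\item Every server has load exactly $1$.
\end{itemize}
The servers in $R_{j,1}\setminus R_{i,1}$ can only be covered by the sets $F\setminus R_{i,1}$, of total rate $1$. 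So every $F$ used contains $R_{j,1}\setminus R_{i,1}$, hence contains $W=R_{i,1}+R_{j,1}$, of dimension $\ell_1+\ell_2-k$ with $k=|T_i\cap T_j|$.

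If $\ell_1+\ell_2-k>r+1$, no such $F$ exists, so your ``dual certificate'' branch cannot work. If $\ell_1+\ell_2-k\le r+1$, all of $W$ is saturated by $s_i$. Then $s_j$ may only use $G$ with $G\cap W=R_{j,1}$; your step~4 as written would overload $W\setminus R_{j,1}$. When $m>r+1$ (so $\lambda_j>0$ at $\boldsymbol{u}_{i,j}$), such $G$ exist iff $r+1+\ell_1-k\le m$.

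Two concrete failures:
\begin{itemize}
\item In $\mathrm{RM}(3,5)$ with $s_i\leftrightarrow v_1v_2$ and $s_j\leftrightarrow v_3v_4v_5$, the bounds \eqref{equa:2} and \eqref{equa:3 Add Bound} read $\lambda_i+\lambda_j\le 4$ and $12\lambda_i+8\lambda_j\le 40$, meeting at $(2,2)$. Yet reaching $(2,2)$ would require a hyperplane containing all of $\mathbb{F}_2^5$.
\item In $\mathrm{RM}(4,6)$ with the same two monomials, $F=W=\{v_6=0\}$ is forced. Every hyperplane $G\supseteq R_{j,1}$ then contains at least $8$ saturated servers outside $R_{j,1}$, so $(2,\tfrac{4}{3})$ is unreachable.
\end{itemize}

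So the statement needs an extra hypothesis. The paper's proof covers only $T_i\subseteq T_j$. For $m>r+1$, $\boldsymbol{u}_{i,j}$ is achievable exactly when both conditions $\ell_1+\ell_2-k\le r+1$ and $r+1+\ell_1-k\le m$ hold; these are automatic when $T_i\subseteq T_j$. In that range your flats-over-$W$ idea does reach $\boldsymbol{u}_{i,j}$, provided $s_j$ is spread uniformly over the $G$ with $G\cap W=R_{j,1}$.
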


\begin{proof}

By combining~\eqref{equa:2} with~\eqref{equa:3 Add Bound} proved above, the two-symbol feasible region is upper bounded by
\(
\bigl(\lambda_i^{*}, \lambda_j^{*}\bigr)
=
\left(
2,\;
\frac{2^{m}-2^{r+1}}{2^{r+1}-2^{\ell_2}}
\right),
\) which yields the vertex $\boldsymbol{u}_{i,j}$.
The vertices $\lambda_j^{\max}\boldsymbol{e}_j$ and $\lambda_i^{\max}\boldsymbol{e}_i$ follow directly from Lemma~\ref{lm2:bound}.

\noindent \underline{Achievability:}
For the achievability part, we provide explicit constructions for the aforementioned upper bounds.

For symbol $s_j$, We let $\mathcal{R}^{\ast}_j$ denote the collection of its recovery sets whose size is exactly
$2^{r+1}-2^{\ell_2}$. By Lemma~\ref{lm1:rec sets}, 
\(
\lvert \mathcal{R}^{\ast}_j \rvert
=
\left[{m-\ell_2 \atop r+1-\ell_2}\right]_2,
\)
and each server $j \in [2^m]\setminus R_{j,1}$ is contained in exactly
\(
\left[{m-\ell_2-1 \atop r-\ell_2}\right]_2
\)
recovery sets in $\mathcal{R}^{\ast}_j$. 
Thus, assigning a request rate of
\(   
\frac{1}{\left[{m-\ell_2-1 \atop r-\ell_2}\right]_2}
\)
to each recovery set in $\mathcal{R}^{\ast}_j$ does not violate capacity constraint. In addition, 
since $R_{j,1}$ is disjoint from all sets in $\mathcal{R}^{\ast}_j$, we can assign a request rate of $1$ to $R_{j,1}$.
 We obtain
\(
\lambda_j^{\max}
=
1 + \frac{\lvert \mathcal{R}^{\ast}_j \rvert}{\left[{m-\ell_2-1 \atop r-\ell_2}\right]_2}
=
1 + \frac{2^{m}-2^{\ell_2}}{2^{r+1}-2^{\ell_2}}.
\)
Therefore, $\lambda_j^{\max}\boldsymbol{e}_j$ is achievable.
By the same argument, the demand vector $\lambda_i^{\max}\boldsymbol{e}_i$ is also achievable.

Next, we prove the achievability of vertex $\boldsymbol{u}_{i,j}$. For symbol $s_j$, we assign request
rate \(0\) to \textbf{any}
\(
\left[{m-\ell_2-1 \atop r-\ell_2}\right]_2
\)
recovery sets in $\mathcal{R}^{\ast}_j$ as well as to $R_{j,1}$, and assign a request rate of
\(
\frac{1}{\left[{m-\ell_2-1 \atop r-\ell_2}\right]_2}
\)
to each of the remaining recovery sets in $\mathcal{R}^{\ast}_j$. Thus
\(
\lambda_j^{*}
=\frac{\lvert \mathcal{R}^{\ast}_j \rvert-\left[{m-\ell_2-1 \atop r-\ell_2}\right]_2}{\left[{m-\ell_2-1 \atop r-\ell_2}\right]_2}
=
\frac{2^{m}-2^{r+1}}{2^{r+1}-2^{\ell_2}}
\) can be satisfied.

For symbol $s_i$, let $\mathcal{R}^{\ast}_i$ denote the collection of its recovery sets of size
$2^{r+1}-2^{\ell_1}$. Define $\mathcal{R}^{\ast}_{i,j} \subsetneq \mathcal{R}^{\ast}_i$ as the subset of recovery sets in $\mathcal{R}^{\ast}_i$ that contain $R_{j,1}\setminus\{1\}$.
The cardinality of $\mathcal{R}^{\ast}_{i,j}$ equals the number of $(r+1)$-dimensional subspaces containing $R_{j,1}$, which is
\(
\left[{m-\ell_2 \atop r+1-\ell_2}\right]_2,
\) which coincides with \(\lvert \mathcal{R}^{\ast}_j \rvert\).
Each recovery set $R_i^\prime \in \mathcal{R}^{\ast}_{i,j}$ corresponds to a unique $(r+1)$-dimensional subspace 
$\mathcal{V}^\prime$, and thus establishes a one-to-one correspondence with a recovery set $R_j^\prime \in \mathcal{R}^{\ast}_j$.
These sets satisfy
\(
R_i^\prime \cup R_{i,1} = R_j^\prime \cup R_{j,1} = \mathcal{V}^\prime.
\)
Accordingly, for $\left[{m-\ell_2-1 \atop r-\ell_2}\right]_2$ recovery sets in $\mathcal{R}^{\ast}_j$ that is assigned request rate \(0\), we assign a request rate of
\(
\frac{1}{\left[{m-\ell_2-1 \atop r-\ell_2}\right]_2}
\)
to the corresponding recovery sets in $\mathcal{R}^{\ast}_{i,j}$. Together with assigning a request rate of $1$ to
$R_{i,1}$,
\(
\lambda_i^{*}
=
1 + \frac{\left[{m-\ell_2-1 \atop r-\ell_2}\right]_2}
{\left[{m-\ell_2-1 \atop r-\ell_2}\right]_2}
=
2
\)
is also achievable. Therefore, the demand vector
\(
\boldsymbol{u}_{i,j}
=
2\,\boldsymbol{e}_i
+
\frac{2^{m}-2^{r+1}}{2^{r+1}-2^{\ell_2}}\,
\boldsymbol{e}_j
\)
can be satisfied.
By the convexity of SRR, the SRR associated with $s_i$ and $s_j$ is
\(
\mathcal{S}_{i,j}(\boldsymbol{G})
= \operatorname{conv}\Bigl(
\bigl\{
\boldsymbol{0}_k,\;
\lambda_i^{\max}\boldsymbol{e}_i,\;
\lambda_j^{\max}\boldsymbol{e}_j,\;
\boldsymbol{u}_{i,j}
\bigr\}
\Bigr).
\)
\end{proof}

\begin{remark}
For any $\mathrm{RM}(r,m)$ code, the maximal achievable convex polytope \(\mathcal{B} \subseteq \mathcal{S}(\boldsymbol{G})\) is given by
\[
\mathcal{B}\!=\!
\operatorname{conv}
\Bigl(\!
\{\boldsymbol{0}_k,\,
\boldsymbol{v}_i\!:\!\forall i \in [k],\,
\boldsymbol{u}_{i,j}\!:\!\forall i \in \mathcal{J}_{\ell_1}, j \in \mathcal{J}_{\ell_2}\}\!
\Bigr).
\]
where \(\boldsymbol{v}_i = \lambda_i^{max}\cdot \boldsymbol{e}_i\) and 
\(\boldsymbol{u}_{i,j}=2\cdot\boldsymbol{e}_i+(\frac{2^{m}-2^{r+1}}{2^{r+1}-2^{\ell_2}})\cdot\boldsymbol{e}_j\) with \(\forall \ell_1,\ell_2 \in [r],\, \ell_1 < \ell_2\). 
Consequently, $\mathcal{B}$ lies between $\mathcal{A}$ and $\Omega$ and provides a more accurate approximation of the SRR. 

For illustration, we consider $\mathrm{RM}(2,4)$, and the projection of $\mathcal{B}$ onto the $(\lambda_1,\lambda_2,\lambda_6)$ subspace is shown in Fig~(\subref{fig(b):rm24}).

For $\mathrm{RM}(r,r+1)$,
its SRR projected onto the $(\lambda_1,\lambda_2,\lambda_3)$ subspace is shown in Fig~(\subref{fig(a):rm_(r,r+1)}).
\end{remark}

\begin{example}
Consider the service rate region associated to $s_1$ and $s_2$ in the system $\mathrm{RM}(2,4)$. The symbols $s_1$ and $s_2$ correspond to message symbols of orders $0$ and $1$, respectively. By Lemma~\ref{lm2:bound}, we have
\(
\lambda_1^{\max} = \frac{22}{7}, \;
\lambda_2^{\max} = \frac{10}{3}.
\)

Substituting the parameters of $\mathrm{RM}(2,4)$ into inequalities~\eqref{equa:2} and~\eqref{equa:3 Add Bound} yields the following two constraints, which are represented by the red and blue line segments in Fig~(\subref{fig(c):s1+2}):
\(
\lambda_1 + \lambda_2 \le \frac{10}{3}, \;
7 \lambda_1 + 6 \lambda_2 \le 22.
\)
From these constraints, we obtain the vertex
\(
(\lambda_1, \lambda_2) = \left( 2, \frac{4}{3} \right).
\)

\noindent \underline{Achievability:}
Following Theorem~\ref{thm:theorem3}, The demand vector 
\(
(\lambda_1, \lambda_2) = \left(0, \frac{10}{3}\right)
\) 
and 
\(
(\lambda_1, \lambda_2) = \left(\frac{22}{7}, 0\right)
\) 
can be achieved easily. We construct a rate allocation that achieves
\(
(\lambda_1, \lambda_2)=\left(2,\frac{4}{3}\right)
\), the recovery sets of $s_1$ and $s_2$ are given as follows:
{\fontsize{8.5pt}{3pt}\selectfont
\[
\begin{array}{ll}
\!\boldsymbol{R_{1,1}\!=\!\{1\}}  
& R_{2,1}\!=\!\{1,2\} \\[2pt]
\!\boldsymbol{R_{1,2}\!=\!\{2,3,4,5,6,7,8\}} 
& R_{2,2}\!=\!\{3,4,5,6,7,8\} \\[2pt]
\!R_{1,3}\!=\!\{2,3,4,9,10,11,12\} 
& R_{2,3}\!=\!\{5,6,11,12,15,16\} \\[2pt]
\!R_{1,4}\!=\!\{2,3,4,13,14,15,16\} 
& R_{2,4}\!=\!\{7,8,11,12,13,14\} \\[2pt]
\!R_{1,5}\!=\!\{2,5,6,9,10,13,14\} 
& \boldsymbol{R_{2,5}\!=\!\{3,4,9,10,11,12\}} \\[2pt]
\!\boldsymbol{R_{1,6}\!=\!\{2,5,6,11,12,15,16\}} 
& \boldsymbol{R_{2,6}\!=\!\{3,4,13,14,15,16\}} \\[2pt]
\!R_{1,7}\!=\!\{2,7,8,9,10,15,16\} 
& \boldsymbol{R_{2,7}\!=\!\{5,6,9,10,13,14\}} \\[2pt]
\!\boldsymbol{R_{1,8}\!=\!\{2,7,8,11,12,13,14\}} 
& \boldsymbol{R_{2,8}\!=\!\{7,8,9,10,15,16\}} \\[2pt]
\cdots \\[2pt]
\!R_{1,16}\!=\!\{4,6,7,10,11,13,16\} \\[2pt]
\end{array}
\]
}
It can be verified that among all $15$ size-$7$ recovery sets of object $s_1$, 
there are exactly $7$ recovery sets of $s_1$ that contain 
$R_{2,1} \setminus \{1\}$, namely $R_{1,2}$ to $R_{1,8}$.
Moreover, these recovery sets are in one-to-one correspondence with the size-$6$ recovery sets of $s_2$.
For instance, $R_{1,6}$ corresponds uniquely to $R_{2,3}$, since
\(
R_{1,1} \cup R_{1,6}
=
R_{2,1} \cup R_{2,3}
=
\{1,2,5,6,11,12,15,16\}.
\)

Among the above-marked recovery sets, the rate allocation is
\(
\lambda_{1,1}=1,\;
\lambda_{1,2}=\lambda_{1,6}=\lambda_{1,8}
=\lambda_{2,5}=\lambda_{2,6}=\lambda_{2,7}=\lambda_{2,8}
=\tfrac{1}{3},
\)
with all remaining recovery sets assigned rate $0$. This yields
\(
\lambda_1=\sum_{j=1}^{16}\lambda_{1,j}=2,
\;
\lambda_2=\sum_{j=1}^{8}\lambda_{2,j}=\tfrac{4}{3}.
\)
Each server $i\in[16]$ utilizes $100\%$ of its capacity.

\begin{remark}
We present a new result of the total service rate of all symbols of $\mathrm{RM}(2,4)$, which is
\(
\sum_{j=1}^{11} \lambda_j \le \frac{13}{3}
\).

One feasible solution that achieves equality is given by
\(
\lambda_{1,1}=1,\;
\lambda_{6,2}=\lambda_{7,2}=\lambda_{8,2}=\lambda_{9,2}=\lambda_{10,3}=\lambda_{11,2}=\frac{1}{3},\;
\lambda_{7,3}=\lambda_{7,4}=\frac{2}{3}.
\)
The corresponding recovery sets are illustrated below. Servers \(2,3,5,9,16\) operate at a utilization level of \(66\%\), while all remaining servers are fully utilized.
{\fontsize{8.5pt}{3pt}\selectfont
\[
\begin{array}{ll}
\!\boldsymbol{R_{1,1}\!=\!\{1\}}  
\quad \boldsymbol{R_{6,2}\!=\!\{5,6,7,8\}} & \!\boldsymbol{R_{7,2}\!=\!\{3,4,7,8\}}\\[2pt]
\!\boldsymbol{R_{7,3}\!=\!\{9,10,13,14\}}  
& \!\boldsymbol{R_{7,4}\!=\!\{11,12,15,16\}} \\[2pt]
\!\boldsymbol{R_{8,2}\!=\!\{3,4,11,12\}}  
& \!\boldsymbol{R_{9,2}\!=\!\{2,4,6,8\}} \\[2pt] 
\!\boldsymbol{R_{10,3}\!=\!\{5,7,13,15\}}  
& \!\boldsymbol{R_{11,2}\!=\!\{2,6,10,14\}} \\[2pt] 
\end{array}
\]
}

Thus, we improves the earlier bound 
\(
\sum_{j=1}^{k} \lambda_j \le 1 + \frac{2^{m}-1}{2^{r}}
\) for \(\mathrm{RM}(2,4)\) in \cite{ref16_arxiv_ly2025service}, 
which is equal to $4.75$.   
\end{remark}

\end{example}

\bibliographystyle{IEEEtran}
\bibliography{ref}

\end{document}